\documentclass[twoside,11pt]{article}


\usepackage{graphicx, subfigure}
\usepackage[top=1in,bottom=1in,left=1in,right=1in]{geometry}
\usepackage[sort&compress]{natbib} \setlength{\bibsep}{0.0pt}
\usepackage{amsfonts, amsmath, amssymb, amsthm, constants}
\usepackage{algorithm, algorithmic}
\usepackage{enumitem}
\usepackage{mathtools}
\usepackage{MnSymbol}

\usepackage{cases}
\usepackage{color}
\definecolor{darkred}{RGB}{100,0,0}
\definecolor{darkgreen}{RGB}{0,100,0}
\definecolor{darkblue}{RGB}{0,0,150}

\usepackage{hyperref}
\hypersetup{colorlinks=true, linkcolor=darkred, citecolor=darkgreen, urlcolor=darkblue}
\usepackage{url}

\definecolor{purple}{rgb}{0.4,.1,.9}

\newtheorem{thm}{Theorem}
\newtheorem{prp}{Proposition}
\newtheorem{lem}{Lemma}

{
\theoremstyle{remark}
\newtheorem{rem}{Remark}
\newtheorem{ex}{Example}

}
{
\newtheoremstyle{propertystyle} 
    {\topsep}                    
    {\topsep}                    
    {}                   
    {}                           
    {\bfseries\itshape}                   
    {.}                          
    {.5em}                       
    {}  
\theoremstyle{propertystyle}
\newtheorem{pro}{Property}
}

\newcommand{\thmref}[1]{Theorem~\ref{thm:#1}}
\newcommand{\prpref}[1]{Proposition~\ref{prp:#1}}

\newcommand{\lemref}[1]{Lemma~\ref{lem:#1}}
\newcommand{\secref}[1]{Section~\ref{sec:#1}}
\newcommand{\figref}[1]{Figure~\ref{fig:#1}}

\newcommand{\proref}[1]{Property~\ref{pro:#1}}
\newcommand{\exref}[1]{Example~\ref{ex:#1}}
\newcommand{\remref}[1]{Remark~\ref{rem:#1}}

\def\beq{\begin{equation}}
\def\eeq{\end{equation}}
\def\beqn{\begin{eqnarray*}}
\def\eeqn{\end{eqnarray*}}
\def\bitem{\begin{itemize}}
\def\eitem{\end{itemize}}
\def\benum{\begin{enumerate}}
\def\eenum{\end{enumerate}}
\def\bmult{\begin{multline*}}
\def\emult{\end{multline*}}
\def\bcenter{\begin{center}}
\def\ecenter{\end{center}}
\def\bsplit{\begin{split}}
\def\esplit{\end{split}}





\def\cA{\mathcal{A}}
\def\cB{\mathcal{B}}

\def\cS{\mathcal{S}}

\def\cU{\mathcal{U}}
\def\cV{\mathcal{V}}
\def\cW{\mathcal{W}}
\def\cX{\mathcal{X}}



\def\bb{\mathbf{b}}



\def\bbB{\mathbb{B}}

\def\bbN{\mathbb{N}}

\def\bbR{\mathbb{R}}



\newcommand{\<}{\langle}
\renewcommand{\>}{\rangle}

\def\eps{\varepsilon}

\usepackage{dsfont}

\def\S{\cS}

\DeclareMathOperator{\curv}{curv}
\def\rad{r}
\def\comp{\mathsf{c}}
\def\H{\mathsf{H}}
\def\origin{\textsc{o}}
\def\aa{a}
\def\bb{b}
\def\cc{c}
\def\k{\textsc{k}}

\raggedbottom


\pagestyle{myheadings}
\markboth{Unconstrained and Constrained Shortest-Path Distances}{E.~Arias-Castro and T.~Le Gouic}

\begin{document}
\thispagestyle{empty}

\title{Unconstrained and Curvature-Constrained Shortest-Path Distances and their Approximation}
\author{Ery Arias-Castro\footnote{\ University of California, San Diego, USA --- \url{http://math.ucsd.edu/\~eariasca/}}
\and
Thibaut Le Gouic\footnote{\ \'Ecole Centrale de Marseille, France --- \url{http://thibaut.le-gouic.perso.centrale-marseille.fr/}}}
\date{}

\maketitle

\begin{abstract}
We study shortest paths and their distances on a subset of a Euclidean space, and their approximation by their equivalents in a neighborhood graph defined on a sample from that subset.  In particular, we recover and extend the results of \cite{bernstein2000graph}.  We do the same with curvature-constrained shortest paths and their distances, establishing what we believe are the first approximation bounds for them.
\end{abstract}

\tableofcontents

\section{Introduction} \label{sec:intro}
Finding a shortest path between two points is a fundamental problem in the general area of optimization with applications ranging from computing the shortest route between two physical locations on a road network \citep{delling2009engineering} to path planning in robotics \citep{lavalle2006planning}.  
In machine learning, shortest paths are at the core of the {\sf Isomap} algorithm for manifold learning \citep{silva2002global,Tenenbaum00ISOmap} and the {\sf MDS-MAP} algorithm \citep{shang2003localization,shang2004improved} for multidimensional scaling in the presence of missing distances.
(See also the work of \cite{kruskal1980designing}.)

In this paper we study shortest paths and shortest path distances on a subset $\S$ of a Euclidean space.  
The resulting mathematical model is closely related to sampling-based motion planning in robotics and is directly relevant to machine learning tasks such as manifold learning and manifold clustering.
Our main interest lies in how well such paths and distances are approximated by their equivalents in a neighborhood graph built on a sample of points from $\S$.  

An important concept in machine learning where data points are available (typically in a Euclidean space), a neighborhood graph based on these points is a graph with nodes indexing the points themselves and edges between two nodes when their corresponding points are within a certain distance.  An edge is weighed by the Euclidean distance between the two underlying points.
(Other variants exist.)  
The shortest paths in the neighborhood graph thus offer a natural discrete analog to the shortest paths on the set, which are continuous by nature.  
This correspondence has algorithmic implications: one is instinctively led to computing shortest paths in the graph to produce estimates for shortest paths on $\S$.  
This is exactly what {\sf Isomap} does \citep{silva2002global}.  
In robotics, sampling-based motion planning algorithms also rely on sampling the environment (here the surface) to discretize the optimization problem \citep{thrun2005probabilistic}.

\subsection{Shortest paths}
Shortest paths are, of course, well-studied objects in mathematics, particularly in metric geometry \citep{burago2001course}.
They have also been the object of intensive study in robotics \citep{latombe2012robot}.  
The approximation of shortest path distances on the surface by shortest path distances in the graph was established by \cite{bernstein2000graph} with a view towards providing theoretical guarantees for {\sf Isomap}.  This was done in the context of a geodesically convex surface.
The same sort of approximation has also been considered in robotics; see, for example, in \citep{karaman2011sampling, kavraki1998analysis, janson2015fast, lavalle2001randomized}, and references therein, where the context is that of a Euclidean domain with holes representing the obstacles that the robot needs to avoid.  
Note that, in this literature, the neighborhood graph is modified to only include collision-free edges.

In \secref{unconstrained} we establish some basic results on shortest paths and the corresponding metric.  We then revisit the results of \cite{bernstein2000graph} on the approximation of this metric with the pseudometric on a neighborhood graph, making a connection with the seminal work of \cite{dubins1957curves}.  We also show that it is possible to approximate shortest paths on the surface with shortest paths in a neighborhood graph under much milder regularity assumptions.

\subsection{Curvature-constrained shortest paths}
We also consider curvature-constrained shortest paths and the corresponding distances.  These have long been considered in robotics to model settings where the robot has limited turning radius.  Theoretical results date back at least to the seminal work of \cite{dubins1957curves}.  See also \citep{boissonnat1992shortest, reeds1990optimal}.  
Still in robotics, sampling-based motion planning algorithms designed to satisfy differential motion constraints (including kinematic/dynamical constraints) are studied, for example, in 
\citep{li2015sparse, karaman2010optimal, schmerling2015optimal, schmerling2015optimal-drift}.  
Importantly, in the present work, no constraints are placed on the initial and final orientations.  Another difference with this literature, where motion is most typically in a Euclidean domain, is that we consider a surface that may be curved, and this forces the shortest paths on to satisfy some nontrivial curvature constraint.
  
In machine learning, angle and curvature-constrained paths have been recently considered for the task of surface learning where the surface may self-intersect \citep{babaeian2015nonlinear} and for the task of multi-surface clustering \citep{babaeian2015multiple}.

In \secref{constrained} we derive some theory on curvature-constrained shortest paths and the corresponding (semi-)metric, and in particular establish bounds on approximations by curvature-constrained shortest paths in a neighborhood graph.  This requires a notion of discrete curvature applicable to polygonal lines, which we describe in \eqref{curv-def}.  

\section{Preliminaries}
In this section we set most of the notation for the reminder of the paper, introduce some fundamental concepts in metric geometry, and also list some basic results that will be used later on.

\paragraph{Vectors}
For two vectors $u,v \in \bbR^D$, their inner product is denoted $\<u,v\>$ and their angle is defined as $\angle (u,v) = \cos^{-1}(\<u,v\>/\|u\| \|v\|) \in [0,\pi]$, where $\|\cdot\|$ denotes the Euclidean norm.  We also define $u \wedge v$ as their wedge product.  Recall that $u \wedge v$ is a 2-vector, and the space of 2-vectors can be endowed with an inner product, and the resulting norm --- also denoted by $\|\cdot\|$ --- satisfies $\|u \wedge v\| = \|u\| \|v\| |\sin \angle(u,v)|$, which is also the area of the parallelogram defined by $u$ and $v$.

\paragraph{Sets}
For $x \in \bbR^D$ and $r > 0$, let $B(x,r)$ denote the open ball of $\bbR^D$ with center $x$ and radius $r$.
For $\cA, \cB \subset \bbR^D$, let $\cA \oplus \cB = \{a + b : a \in \cA, b \in \cB\}$, their Minkowski sum.  In particular, $\cA \oplus B(0, r)$ is the $r$-tubular neighborhood of $\cA$.

\begin{lem} \label{lem:compact-in-open}
Consider two subsets $\cA \subset \cB$ of a Euclidean space, with $\cA$ compact and $\cB$ open.  Then there is $r > 0$ such that $\cA \oplus B(0, r) \subset \cB$.  
\end{lem}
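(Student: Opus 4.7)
The plan is to exploit compactness of $\cA$ together with closedness of $\cB^c$ to extract a uniform lower bound on distances. The key object is the distance function $f : \cA \to [0,\infty]$ defined by $f(a) = d(a, \cB^c)$, where we set $f \equiv +\infty$ in the trivial case $\cB = \bbR^D$.

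First, I would dispose of the trivial case $\cB = \bbR^D$ by picking any $r > 0$. Otherwise, $\cB^c$ is a nonempty closed set, so $f(a) = \inf_{c \in \cB^c} \|a - c\|$ is well-defined, finite, and $1$-Lipschitz on $\bbR^D$ (standard property of distance-to-a-set functions), hence continuous on $\cA$. Since $\cA \subset \cB$, no point of $\cA$ lies in $\cB^c$, and closedness of $\cB^c$ implies $f(a) > 0$ for every $a \in \cA$.

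Next, I would apply compactness of $\cA$: the continuous function $f$ attains its infimum on $\cA$ at some $a_\star \in \cA$, so $r := f(a_\star) = \min_{a \in \cA} f(a) > 0$. I claim this $r$ works. Indeed, for any $a \in \cA$ and any $b \in B(0,r)$, the triangle inequality for $f$ gives
\[
f(a + b) \;\geq\; f(a) - \|b\| \;\geq\; r - \|b\| \;>\; 0,
\]
so $a + b \notin \cB^c$ (again because $\cB^c$ is closed, hence equal to its zero-set under $f$), i.e., $a + b \in \cB$. This shows $\cA \oplus B(0,r) \subset \cB$.

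There is no real obstacle here; the only thing to be a bit careful about is the strict-versus-non-strict inequality coming from the fact that $B(0,r)$ is open, which is precisely what allows us to take $r = \min f$ rather than, say, $r = \tfrac{1}{2}\min f$. A cleaner but less constructive alternative would be a contradiction argument: if no such $r$ existed, one could choose $a_n \in \cA$ and $b_n \to 0$ with $a_n + b_n \in \cB^c$, pass to a convergent subsequence $a_{n_k} \to a_\infty \in \cA$ by compactness, and conclude $a_\infty \in \cB^c \cap \cA$, contradicting $\cA \subset \cB$. I would likely present the distance-function argument since it actually produces an explicit $r$.
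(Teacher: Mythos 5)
Your distance-function argument is correct, and it takes a genuinely different route from the paper. The paper argues by contradiction: assuming no such $r$ exists, it picks $x_m \in \cA \oplus B(0,1/m) \setminus \cB$, extracts a convergent subsequence by boundedness, shows the limit lies in $\cA$ (hence in the open set $\cB$), and derives a contradiction from the fact that $x_m$ eventually enters a ball around the limit contained in $\cB$ --- this is essentially the ``cleaner but less constructive alternative'' you sketch in your last paragraph. Your main argument instead introduces $f(a) = d(a, \cB^\comp)$, uses its $1$-Lipschitz continuity and the extreme value theorem on the compact set $\cA$ to get $r = \min_\cA f > 0$, and verifies directly that this $r$ works. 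What your approach buys is an explicit, optimal value of $r$ (the distance from $\cA$ to $\cB^\comp$) and a shorter verification; what the paper's approach buys is avoiding the distance function entirely, relying only on sequential compactness, which matches the style of the sequence-based compactness arguments it reuses later (e.g., in the proofs of Theorem 3.7 and Lemma 4.5). Your observation about why the strict inequality $\|b\| < r$ lets you take $r = \min f$ exactly, rather than half of it, is a nice touch. The only pedantic gap is the degenerate case $\cA = \emptyset$, where the minimum is over an empty set; any $r>0$ works there, and the paper's proof silently covers it since no $x_m$ can be chosen.
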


\begin{proof}
Suppose the statement is not true.  In that case, for $m \ge 1$ integer, take $x_m \in \cA \oplus B(0, 1/m) \setminus \cB$.  Since $(x_m)$ is bounded, we may assume WLOG that it converges to some $x$.  
(Here and elsewhere, $(x_m)$ is shorthand for the sequence $x_1, x_2, \dots$, and when we write $x_m \to x$ we mean that $(x_m)$ converges to $x$.)
Clearly $x \in \cA$, since $x \in \cA \oplus B(0, 1/m)$ for all $m\ge1$ and $\cA$ is compact.  This implies that $x \in \cB$, and since $\cB$ is open, there is $r > 0$ such that $B(x,r) \subset \cB$.  Since $x_m \to x$, for $m$ large enough, $x_m \in B(x,r)$, implying $x_m \in \cB$, which is a contradiction.
\end{proof}

For two sets $\cA$ and $\cB$, define their Hausdorff distance
\beq \label{hausdorff}
\H(\cA, \cB) = \max(\H(\cA \mid \cB), \H(\cB \mid \cA)), \qquad \H(\cA \mid \cB) := \sup_{a \in \cA} \inf_{b \in \cB} \|a - b\|.
\eeq
By convention, we set $\H(\emptyset, \cA) = \infty$ for any $\cA$.
Note that 
\[
\H(\cA \mid \cB) = \inf\big\{h >0 : \cA \subset \cB \oplus B(0,h)\big\}.
\]  

%

\paragraph{Curves}
A curve in $\bbR^D$ is a continuous function $\gamma: I \to \bbR^D$ on an interval $I \subset \bbR$.  
We will often identify the function $\gamma$ with its image $\gamma(I)$. 
We say that a sequence of curves $\gamma_m$ converges uniformly to a curve $\gamma$ if these curves can be parameterized in such a way that the convergence as functions is uniform; see \cite[Def 2.5.13]{burago2001course}.  

We note that, for two parameterized curves $\gamma: I \to \bbR^D$ and $\zeta: I \to \bbR^D$,
\beq \label{hausdorff-unif}
\H(\gamma, \zeta) \le \sup_{t \in I} \|\gamma(t) - \zeta(t)\|,
\eeq
and in particular, for curves, uniform convergence implies convergence in Hausdorff metric.

\paragraph{Length}
For $Z = (z_1, \dots, z_m) \subset \bbR^D$, let $\Lambda(Z) = \sum_i \|z_i - z_{i+1}\|$, which is the length of the polygonal line defined by $Z$. 
This definition is extended to general curves in the usual way \cite[Def 2.3.1]{burago2001course}: the length of a curve $\gamma : I \to \bbR^D$ is 
\beq\label{length}
\Lambda(\gamma) = \sup \sum_j \|\gamma(t_{j+1}) - \gamma(t_j)\|,
\eeq
where the supremum is over all increasing sequences $(t_j) \subset I$. 
%
Any curve with finite length admits a unit-speed parameterization \cite[Def~2.5.7, Prop~2.5.9]{burago2001course}, meaning that for any such curve $\gamma: I \to \bbR^D$ there is an interval $J \subset \bbR$ and a continuous function $\nu : J \to \bbR^D$ such that $\nu(J) = \gamma(I)$ and $\Lambda(\nu([a,b])) = b-a$ for all $a < b$ in $J$.  All curves will be assumed to be unit-speed (i.e., parameterized by arc length) by default.  Note that a unit-speed curve is differentiable almost everywhere with unit norm derivative, meaning $\|\dot\gamma(t)\| = 1$ for almost all $t \in I$; in particular, such a curve is 1-Lipschitz, where we say that a function $f: \Omega \subset \bbR^k \to \bbR^l$ is $L$-Lipschitz for some $L > 0$ if $\|f(x) - f(y)\| \le L \|x - y\|$ for all $x,y \in \Omega$.  

\begin{lem} \label{lem:Lip-length}
For any curve $\gamma$ and any $L$-Lipschitz function $f$, $\Lambda(f \circ \gamma) \le L \Lambda(\gamma)$.
\end{lem}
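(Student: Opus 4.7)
The plan is to argue directly from the definition of length given in \eqref{length} as a supremum of polygonal approximations, and use the Lipschitz property termwise. This is essentially a one-line argument once the definition is unpacked.

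First I would parameterize $\gamma$ by some interval $I$ and, for any increasing sequence $(t_j) \subset I$, apply the Lipschitz bound to each segment:
\[
\|f(\gamma(t_{j+1})) - f(\gamma(t_j))\| \le L\,\|\gamma(t_{j+1}) - \gamma(t_j)\|.
\]
Summing over $j$ yields
\[
\sum_j \|f(\gamma(t_{j+1})) - f(\gamma(t_j))\| \le L \sum_j \|\gamma(t_{j+1}) - \gamma(t_j)\| \le L\,\Lambda(\gamma),
\]
where the last inequality uses the definition \eqref{length} applied to $\gamma$.

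Then I would take the supremum of the left-hand side over all increasing sequences $(t_j) \subset I$ (noting that $f \circ \gamma$ is continuous on $I$ since $f$ is continuous and $\gamma$ is continuous, so the length $\Lambda(f\circ\gamma)$ is well-defined from \eqref{length} with the same parameterizing interval $I$). This gives $\Lambda(f \circ \gamma) \le L\, \Lambda(\gamma)$ directly.

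There is no serious obstacle here; the only minor subtlety is making sure one uses the same partition on both sides before taking suprema, and observing that $f\circ\gamma$ is a curve (i.e., continuous), so that \eqref{length} applies to it. The statement holds vacuously (both sides equal $+\infty$ is allowed or the bound is trivial) when $\Lambda(\gamma) = \infty$, and otherwise the finite bound follows as above.
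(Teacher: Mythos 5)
Your proof is correct and follows essentially the same route as the paper: apply the Lipschitz bound termwise to the polygonal sums in the definition \eqref{length} and then pass to the supremum. If anything, your write-up is slightly more careful about the order of the supremum argument than the paper's own version.
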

\begin{proof}
Consider a parameterization $\gamma: I \to \bbR^D$.  Then for any increasing sequence $(t_j) \subset I$,
\[
\Lambda(f \circ \gamma)
\le \sum_j \|f \circ \gamma(t_{j+1}) - f \circ \gamma(t_j)\|
\le L \sum_j \|\gamma(t_{j+1}) - \gamma(t_j)\|,
\]
and by taking the supremum of such sequences, we obtain the result, since the right-hand side becomes $L \Lambda(\gamma)$.
\end{proof}

\begin{lem} \label{lem:projection-ball}
Suppose $\gamma$ is a curve and $B$ is a closed ball such that $\gamma \cap B = \emptyset$.  If $\zeta$ denotes the metric projection of $\gamma$ onto $B$, then $\Lambda(\zeta) < \Lambda(\gamma)$.
\end{lem}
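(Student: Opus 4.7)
My plan is to deduce the claim from \lemref{Lip-length} by showing that the metric projection $\pi$ onto $B$, restricted to the image of $\gamma$, is $L$-Lipschitz for an explicit $L<1$; the length factor strictly less than one then upgrades the usual non-expansive length bound to the desired strict inequality. Since strict inequality is otherwise vacuous, I will assume $\Lambda(\gamma) < \infty$ and parameterize $\gamma$ by arc length on a compact interval, so that its image is compact.

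Next I would exploit compactness to extract a uniform separation: since $B$ is closed and disjoint from the (compact) image of $\gamma$, the infimum $\delta := \inf_{x \in \gamma} d(x, B)$ is attained and strictly positive. Writing $c$ and $r$ for the center and radius of $B$, every $x$ on $\gamma$ then lies at distance $\|x-c\| \ge \rho := r + \delta > r$ from $c$.

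The core of the argument is a short calculation with the explicit formula $\pi(x) = c + r(x-c)/\|x-c\|$. Setting $t := \<x-c,y-c\>/(\|x-c\|\|y-c\|)$ one checks $\|\pi(x)-\pi(y)\|^2 = 2r^2(1-t)$, while $\|x-y\|^2 \ge 2\|x-c\|\|y-c\|(1-t) \ge 2\rho^2(1-t)$, where the first step is AM--GM. Dividing yields $\|\pi(x)-\pi(y)\| \le (r/\rho)\|x-y\|$ for all $x,y$ on $\gamma$, and \lemref{Lip-length} then gives $\Lambda(\zeta) \le (r/\rho)\Lambda(\gamma) < \Lambda(\gamma)$ since $r/\rho < 1$.

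The main obstacle is producing this strict Lipschitz constant. The generic non-expansiveness of projection onto a closed convex set, together with \lemref{Lip-length}, only gives the weak bound $\Lambda(\zeta)\le \Lambda(\gamma)$; the upgrade to a strict inequality relies essentially on the uniform separation $\delta > 0$, which in turn requires both the compactness of the image of $\gamma$ and the closedness of $B$.
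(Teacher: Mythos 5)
Your argument is correct under your reading of the statement, and it shares the paper's overall strategy: exhibit a Lipschitz constant strictly less than $1$ for the metric projection on the relevant region and then invoke \lemref{Lip-length}. The executions differ in two ways. First, you obtain the contraction constant $r/\rho$ from the elementary identity $\|\pi(x)-\pi(y)\|^2=2r^2(1-t)$ together with $\|x-y\|^2\ge 2\|x-c\|\,\|y-c\|(1-t)$, whereas the paper computes the differential $D_xf=\frac{1}{\|x\|}({\rm Id}-xx^\top/\|x\|^2)$ and uses Taylor's theorem; your derivation is cleaner and sidesteps the paper's fuss about segments through the origin. Second, and more substantively, you apply the strict contraction \emph{globally}, which requires a uniform separation $\delta>0$ between $\gamma$ and $B$. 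The paper instead splits the curve into three pieces, uses the generic $1$-Lipschitz bound for projection onto a convex set on the outer pieces, and applies the strict contraction only on a subinterval $[a,b]$ where $\|\gamma(s)\|\ge h>1$; the existence of such a subinterval needs only one point of $\gamma$ plus continuity, not uniform separation.

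This difference is not purely cosmetic. Your compactness step tacitly assumes that a finite-length curve can be reparameterized on a compact interval with the same image; if $\gamma$ is given on a non-closed interval, its continuous extension may touch $\partial B$, in which case $\delta=0$ and your Lipschitz constant degenerates to $1$. That is precisely the configuration in the lemma's one application (\lemref{shortest-arc}), where $\gamma_*$ meets $\partial B$ at its endpoints. The repair is cheap --- adopt the paper's three-piece decomposition, using your algebraic bound on a subarc bounded away from $B$ and the $1$-Lipschitz bound elsewhere --- but as written your proof covers a strictly smaller class of curves than the paper's. (A minor shared caveat: both proofs tacitly assume $0<\Lambda(\gamma)<\infty$; your claim that the infinite-length case is ``vacuous'' is not right, since $\Lambda(\zeta)<\Lambda(\gamma)=\infty$ is then the genuine assertion $\Lambda(\zeta)<\infty$, which can fail for a curve winding infinitely often around $B$.)
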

\begin{proof}
Assume WLOG that $B$ is the closed unit ball and consider a unit-speed parameterization $\gamma: [0,\ell] \to \bbR^D$.  
Let $f$ denote the metric projection onto $B$, which has the simple expression $f(x) = x/\|x\|$ for $x \notin B$ (while, of course, $f(x) = x$ for $x \in B$).  By continuity, there is $h > 1$ and a subinterval $[a,b] \subset I$, with $a < b$, such that $\|\gamma(s)\| \ge h$ for all $s \in [a,b]$.  
For $x \ne 0$, the differential of $f$ at $x$, denoted $D_x f$, is equal to $\frac1{\|x\|} ({\rm Id} - x x^\top/\|x\|^2)$, where ${\rm Id}$ denotes the identity linear function.  Hence, $D_x f$ has operator norm equal to $1/\|x\|$.
By Taylor's theorem, we thus have 
\[
\|f(x) - f(y)\| \le \frac{\|x - y\|}{\min(\|x\|, \|y\|)}, 
\]
for all $x, y$ such that the line segment joining $x$ and $y$ does not contain the origin, and this extends to all $x,y \ne 0$ by continuity.  In particular, $f$ is $(1/h)$-Lipschitz on $\{x : \|x\| \ge h\}$.  
Then, by \lemref{Lip-length},
\begin{align*}
\Lambda(\zeta)
&= \Lambda(\zeta([0,a]) + \Lambda(\zeta([a,b]) + \Lambda(\zeta([b,\ell]) \\
&\le \Lambda(\gamma([0,a]) + (1/h) \Lambda(\gamma([a,b]) + \Lambda(\gamma([b,\ell]) \\
&< \Lambda(\gamma([0,a]) + \Lambda(\gamma([a,b]) + \Lambda(\gamma([b,\ell]) 
= \Lambda(\gamma),
\end{align*}
by the fact that $1/h < 1$ and $\Lambda(\gamma([a,b]) = b -a > 0$ by construction.
\end{proof}

\paragraph{Curvature}
We say that a unit-speed curve $\gamma$ has curvature bounded by $\kappa$ if it is differentiable and its derivative is $\kappa$-Lipschitz.  
Assuming the curve $\gamma$ is twice differentiable at $t$, its curvature at $t$ is defined as
\beq \label{curv-gamma}
\curv(\gamma,t) = \frac{\|\dot\gamma(t) \wedge \ddot\gamma(t)\|}{\|\dot\gamma(t)\|^3}.
\eeq
In that case, $\gamma$ has curvature bounded by $\kappa$ if and only if $\sup_t \curv(\gamma,t) \le \kappa$.  


\section{Shortest paths and their approximation} \label{sec:unconstrained}

In this section, we consider the intrinsic metric on a subset and its approximation by a pseudo-metric defined based on a neighborhood graph built on a finite sample of points from the surface.
In \secref{intrinsic-metric} we define the intrinsic metric on a given subset, and list a few of its properties.
In \secref{pseudo-metric} we define the notion of neighborhood graph and a pseudo-metric based on shortest path distances in that graph.
In \secref{approximation} we show that this pseudo-metric can be used to approximate the intrinsic metric on a subset.  We discuss the results obtained by \cite{bernstein2000graph} and make a connection with the classical work of \cite{dubins1957curves}.

\subsection{The intrinsic metric on a surface}
\label{sec:intrinsic-metric}
The intrinsic metric on a set $\S$ is the metric inherited from the ambient space $\bbR^D$.  For $x, x' \in \S$, it is defined as
\beq \label{delta_S}
\delta_\S(x, x') = \inf\big\{a : \exists \gamma: [0,a] \to \S, \text{ 1-Lipschitz, with } \gamma(0) = x \text{ and } \gamma(a) = x'\big\}.
\eeq

When $\delta_\S(x,x') < \infty$ for all $x,x' \in \S$, we say that $\S$ is path-connected \citep[Sec 2.5]{waldmann2014topology}.
A lot is known about this type of metric \citep{burago2001course}.  In particular, when $\S$ is a smooth submanifold, this is the Riemannian metric induced by the ambient space \cite[Sec 5.1.3]{burago2001course}.

\begin{lem} \label{lem:delta_S}
If $\S \subset \bbR^D$ is closed and $x,x' \in \S$ are such that $\delta_\S(x,x') < \infty$, the infimum in \eqref{delta_S} is attained.  If $\S \subset \bbR^D$ is open and connected, $\delta_\S(x,x') < \infty$ for all  $x,x' \in \S$.
\end{lem}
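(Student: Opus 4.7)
The plan is to handle the two assertions in \lemref{delta_S} separately, using standard tools: an Arzel\`a--Ascoli compactness argument for the first, and a clopen-set argument exploiting local convexity of open sets for the second.

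For the first claim, set $a = \delta_\S(x,x')$ and pick a minimizing sequence of 1-Lipschitz curves $\gamma_m : [0,a_m] \to \S$ with $\gamma_m(0) = x$, $\gamma_m(a_m) = x'$, and $a_m \downarrow a$. Since the domains differ, I would first rescue a common domain: set $A = a_1$ (WLOG) and extend each $\gamma_m$ to $\tilde\gamma_m : [0,A] \to \S$ by declaring $\tilde\gamma_m(t) = x'$ for $t \in [a_m, A]$. Each $\tilde\gamma_m$ is still 1-Lipschitz, and since $\tilde\gamma_m(0) = x$, the family is pointwise bounded (inside the closed ball of radius $A$ around $x$). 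By Arzel\`a--Ascoli, a subsequence $\tilde\gamma_{m_k}$ converges uniformly to some $\gamma : [0,A] \to \bbR^D$. The uniform limit of 1-Lipschitz maps is 1-Lipschitz, $\gamma(0) = x$, and since $\S$ is closed and $\tilde\gamma_{m_k}(t) \in \S$ for every $t$, we get $\gamma([0,A]) \subset \S$. The remaining point is that $\gamma(a) = x'$: by the 1-Lipschitz bound,
\[
\|\gamma(a) - x'\| \le \|\gamma(a) - \tilde\gamma_{m_k}(a)\| + \|\tilde\gamma_{m_k}(a) - \tilde\gamma_{m_k}(a_{m_k})\| \le \|\gamma(a) - \tilde\gamma_{m_k}(a)\| + |a - a_{m_k}|,
\]
which vanishes as $k \to \infty$. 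Then $\gamma|_{[0,a]}$ is admissible in \eqref{delta_S} and attains the infimum.

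For the second claim, fix $x \in \S$ and consider $U = \{y \in \S : \delta_\S(x,y) < \infty\}$. I would show $U$ and $\S \setminus U$ are both open in $\S$ using the local convexity afforded by $\S$ being open in $\bbR^D$. Given any $y \in \S$, pick $r > 0$ with $B(y,r) \subset \S$ (possible since $\S$ is open). Any $z \in B(y,r)$ is connected to $y$ inside $\S$ by the unit-speed parameterization of the straight segment $[y,z]$, which is a 1-Lipschitz curve of length $\|y-z\| < r$. Hence $\delta_\S(y,z) < \infty$, and by the triangle inequality for $\delta_\S$ (which is straightforward to verify by concatenation of 1-Lipschitz curves), $z \in U$ iff $y \in U$. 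So $B(y,r)$ lies entirely inside $U$ or entirely inside $\S \setminus U$ according to whether $y \in U$ or not. Connectedness of $\S$ together with $x \in U \ne \emptyset$ then forces $U = \S$.

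I do not anticipate a serious obstacle; the only delicate point is verifying that the endpoint condition $\gamma(a_m) = x'$ survives passing to the limit in part one, and this is exactly where the 1-Lipschitz property and the convergence $a_m \to a$ combine cleanly, as above. The triangle inequality for $\delta_\S$ needed in part two is a routine concatenation argument using the characterization in \eqref{delta_S}, and I would invoke it without further comment.
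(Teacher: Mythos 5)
Your proof is correct and follows essentially the same route as the paper, which handles the first claim by the same minimizing-sequence/Arzel\`a--Ascoli compactness argument (deferred there to the proof of the curvature-constrained analogue, \lemref{delta_kappa}) and the second by citing the standard fact that connected open subsets of Euclidean space are path-connected --- precisely the clopen/local-convexity argument you write out. You have simply supplied in full the details the paper leaves to a cross-reference and a citation, and your handling of the endpoint condition via the extension $\tilde\gamma_m$ is clean.
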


\begin{proof}
For the first part, we refer the reader to the proof of \lemref{delta_kappa}.
For the second part, every connected open set in a Euclidean space is also path-connected \citep[Ex~2.5.13]{waldmann2014topology}.    
\end{proof}

The intrinsic metric on $\S \subset \bbR^D$ is in general different from the ambient Euclidean metric inherited from $\bbR^D$.  The two coincide only when $\S$ is convex.  However, it is true that $\|x-x'\| \le \delta_\S(x,x')$ for all $x,x' \in \S$, and in particular this implies that the ambient topology is always at least as fine as the intrinsic topology.  But there are cases where the two topologies differ.  

\begin{ex}[A set with infinite intrinsic diameter and finite ambient diameter]  \label{ex:spiral}
Consider a closed spiral with infinite length, for example defined as $\S = \bar\Gamma$, where $\Gamma(t) := (\rho(t) \cos(t), \rho(t) \sin(t))$ for $t \ge 0$ and $\rho : [0, \infty) \to (0,1]$ one-to-one (decreasing) and such that $\int_0^\infty \rho(t) {\rm d}t = \infty$.  The resulting set $\S$ is compact in $\bbR^2$, but unbounded for its intrinsic metric since $\delta_\S(\origin, x) = \infty$ for all $x \in \S$.  ($\origin$ denotes the origin.)  In particular, if $t_k \to 0$ and $x_k = \Gamma(t_k)$, then $x_k \to \origin$ in the ambient topology, while $x_k \not\to \origin$ in the intrinsic topology.
Suppose we now thicken the spiral and redefine $\S = \overline{\bigcup_{t \ge 0} \Gamma(t) \oplus B(\origin, \nu(t))}$ where $\nu : [0,\infty) \to (0,1]$ is decreasing and such that $\nu(t) + \nu(t+2\pi) < \rho(t) - \rho(t+2\pi)$.
In that case, $\S$ is the closure of its interior and, assuming $\rho$ and $\nu$ are $C^\infty$, $\partial \S$ is $C^\infty$ except at the origin~\origin.  And still, $\S$ has infinite intrinsic diameter.  
\end{ex}


\begin{ex}[A set with finite intrinsic diameter having different intrinsic and ambient topologies] \label{ex:slices}
Let $\Gamma(t) = (\rho(t) \cos(t), \rho(t) \sin(t))$ with $\rho: [0, 2\pi) \to (0,1]$ continuous and strictly decreasing and satisfying $\lim_{t\to 2\pi} \rho(t) = 0$.  Consider a strictly increasing sequence $t_0 = 0 < t_1 < t_2 < \cdots$ such that $\lim_{k \to \infty} t_k = 2\pi$.  In the cone of $\bbR^2$ defined in polar coordinates by $\{(r, \theta) : t_{2k} < \theta < t_{2k+2}\}$ consider a $C^\infty$ self-avoiding path $\gamma_k$ of length 1 starting at $x_k :=\Gamma(t_{2k+1})$ and ending at the origin \origin.  Note that $\gamma_k \cap \gamma_l = \{\origin\}$ when $k \ne l$.  Define $\S = \overline{\bigcup_k \gamma_k}$.  Clearly, $\delta_S(x, \origin) \le 1$ for all $x \in \S$, so that $\S$ has intrinsic diameter bounded by 2.  By construction $(x_k)$ converges to \origin~in the ambient topology but is not even convergent in the intrinsic topology.  (If it were to converge in the intrinsic topology, the limit would have to be \origin, but $\delta_\S(x_k, \origin) = 1$ for all $k$.)
Also, as in \exref{spiral}, if we carefully thicken each $\gamma_k$, the resulting $\S$ can be made to be the closure of its interior and have $C^\infty$ border except at the origin~\origin.
\end{ex}


Having established that the ambient and intrinsic topologies need not coincide, we will mostly focus on the case where they do, which corresponds to assuming the following.
\begin{pro} \label{pro:coincide}
The intrinsic and ambient topologies coincide on $\S$.
\end{pro}

The following is well-known to the specialist.  We provide a proof for completeness, and also because similar, but more complex arguments will be used later on.

\begin{lem} \label{lem:smooth}
Any smooth submanifold of $\S \subset \bbR^D$ with empty or smooth boundary satisfies \proref{coincide}.
\end{lem}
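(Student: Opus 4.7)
The plan is to establish a local bi-Lipschitz comparison: for every $x \in \S$, there exist an ambient neighborhood $W$ of $x$ in $\bbR^D$ and a constant $C > 0$ such that $\delta_\S(y, z) \le C \|y - z\|$ for all $y, z \in W \cap \S$. Since the reverse bound $\|y - z\| \le \delta_\S(y, z)$ always holds (as noted in the text preceding \exref{spiral}), such a comparison forces the two metric topologies to have the same convergent sequences in a neighborhood of each point, which is exactly the content of \proref{coincide}.

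Fix $x \in \S$ and take a smooth local parameterization $\phi : V \to \S$ coming from the submanifold structure: $V$ is open in $\bbR^d$ if $x$ lies in the interior of $\S$, and open in the closed half-space $\{u \in \bbR^d : u_d \ge 0\}$ if $x$ lies on the (smooth) boundary. In either case $\phi$ is a homeomorphism onto an ambient neighborhood of $x$ in $\S$ and the differential $D\phi$ is of full rank $d$ everywhere on $V$. Let $u_0 = \phi^{-1}(x)$ and pick a convex neighborhood $U \subset V$ of $u_0$ (a small open ball, or a half-ball in the boundary case). Since $D\phi$ is continuous, there is $L > 0$ with $\|D\phi(u)\| \le L$ on $\bar U$, so $\phi$ is $L$-Lipschitz on $U$ via the mean value inequality along the straight line segments, which remain in $U$ by convexity. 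Because $D\phi(u_0)$ is injective it admits a left inverse $A$, hence $\|D\phi(u_0) v\| \ge \|A\|^{-1}\|v\|$ for all $v \in \bbR^d$; combining this with the Taylor expansion $\phi(u) - \phi(u') = D\phi(u_0)(u - u') + o(\|u - u'\|)$, uniform on $U$ after shrinking $U$ further, yields the matching lower bound $\|\phi(u) - \phi(u')\| \ge c\|u - u'\|$ on $U$ for some $c > 0$. Thus $\phi$ is bi-Lipschitz on $U$.

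To finish, choose an open set $W \subset \bbR^D$ containing $x$ with $W \cap \S \subset \phi(U)$, which is possible because $\phi(U)$ is open in $\S$ for the subspace topology. For $y, z \in W \cap \S$ with preimages $u = \phi^{-1}(y)$ and $u' = \phi^{-1}(z)$ in $U$, the curve $t \mapsto \phi((1-t)u + t u')$ lies in $\S$ (again by convexity of $U$) and joins $y$ to $z$; by \lemref{Lip-length} its length is at most $L\|u - u'\| \le (L/c)\|\phi(u) - \phi(u')\| = (L/c)\|y - z\|$. This gives $\delta_\S(y, z) \le (L/c)\|y - z\|$ on $W \cap \S$, as desired.

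The main obstacle is mostly bookkeeping rather than technical: one has to handle the boundary case cleanly by picking a convex chart domain inside the half-space so that straight line segments between preimages remain in the chart, and one has to invoke the standard local bi-Lipschitz property of a smooth immersion carefully because $\phi$ is not a diffeomorphism of an ambient open set (only onto its image). Once the bi-Lipschitz comparison on $U$ is in hand, \lemref{Lip-length} converts the straight-line connection in parameter space into a length bound in $\S$ and the conclusion is automatic.
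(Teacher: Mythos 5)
Your proof is correct and uses essentially the same mechanism as the paper's: take a local chart around each point, join nearby points by the image of the straight segment in the chart domain, and control the resulting length by Lipschitz bounds on the chart map and its inverse. The only difference is presentational — you establish a direct local bi-Lipschitz comparison $\delta_\S(y,z)\le C\|y-z\|$, whereas the paper runs the same estimate inside a proof by contradiction along a sequence $x_m\to x$.
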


\begin{proof}
Let $\S \subset \bbR^D$ be a smooth submanifold of dimension $d$.  Assume for contradiction that the topologies do not coincide.  Then there is $x \in \S$ and $c > 0$, and a sequence $(x_m) \in \S$ such that $\|x-x_m\| \le 1/m$ and $\delta_\S(x,x_m) \ge c$ for all $m \ge 1$.  Let $\cW \subset \bbR^D$ be an open set containing $x$ and $f : \cW \cap \S \to \cU$ be a diffeomorphism, where $\cU$ is an open subset of either $\bbR^d$ or $\bbR^{d-1} \times \bbR_+$ --- the latter if $x \in \partial \S$.  
Let $u = f(x)$ and $h > 0$ such that $\cU_0 := \bar B(u,h) \subset \cU$.
Let $W_0 = f^{-1}(\cU_0)$ and define $\lambda = \max_{w_0 \in \cW_0} \|D_{w_0} f\| < \infty$ and $\lambda^- = \max_{u_0 \in \cU_0} \|D_{u_0} f^{-1}\| < \infty$, where in this instance $\|\cdot\|$ denotes the usual operator.  
For $m$ sufficiently large we have $x_m \in \cW$, in which case we let $u_m = f(x_m)$.  For $m$ sufficiently large we also have $[u u_m] \subset \cU$, in which case we let $\gamma_m(t) = f^{-1}((1-t) u + t u_m)$ defined on $[0,1]$.  Then $\gamma_m([0,1])$ is a curve on $\S$ joining $x$ and $x_m$, of length
\[\Lambda(\gamma_m) = \int_0^1 \|\dot\gamma_m(t)\| {\rm d}t = \int_0^1 \|(D_{\gamma_m(t)} f)^{-1} (u_m - u)\| {\rm d}t \le \lambda^- \|u_m - u\| \le \lambda^- \lambda \|x_m-x\|. 
\]
This leads to a contradiction, since $\Lambda(\gamma_m) \ge \delta_\S(x,x_m) \ge c > 0$ while $\|x_m-x\| \le 1/m \to 0$.
\end{proof}

\subsection{A neighborhood graph and its metric}
\label{sec:pseudo-metric}
We approximate the intrinsic metric (shortest-path distance) on $\S$ with the metric (shortest-path distance) on a neighborhood graph based on a sample from $\S$ denoted $\cX = \{x_1, \dots, x_N\} \subset \S$.  
While such a graph has node set indexing $\cX$ --- which we take to be $\{1, \dots,N\}$ --- there are various ways of defining the edges.  
In what follows, we write $i \sim j$ when nodes $i$ and $j$ are neighbors in the graph.
We will use the following well-known variant \citep{maier2009optimal}:
\bitem \setlength{\itemsep}{0in}
\item {\em $\rad$-ball graph:}  $i \sim j$ if and only if $\|x_i - x_j\| \le \rad$.
\eitem
We weigh each edge $i \sim j$ with the Euclidean distance between $x_i$ and $x_j$, and set the weight to $\infty$ when $i \nsim j$, thus working with
\beq \label{weight}
w_r(i,j) = \begin{cases}
\|x_i - x_j\|, & \text{if } i \sim j; \\
\infty, & \text{otherwise}.
\end{cases}
\eeq

In the context of a weighted graph, we can define a path as simply a sequence of nodes, and its length is then the sum of the weights over the sequence of node pairs that defines it.  
In our context, the length of a path $(i_1, \dots, i_m)$ is thus
\[
\Lambda_r(i_1, \dots, i_m) = \sum_{j=1}^{m-1} w_r(i_j, i_{j+1}).
\]
Equivalently, this is the length of the polygonal line defined by the sequence of points $(x_{i_1}, \dots, x_{i_m})$.
The shortest-path distance between $i$ and $j$ is defined as the length of the shortest path joining $i$ and $j$, namely
\beq\label{Lambda}
\Lambda_r^*(i,j) = \min\big\{\Lambda_r(k_1, \dots, k_m): m \ge 1, k_1 = i, k_m = j\big\}.
\eeq
This is the discrete analog of the intrinsic metric on a set defined in \eqref{delta_S}.

For two sample points, $x_i, x_j \in \cX$, define 
\[\Delta_\rad(x_i, x_j) = \Lambda^*_\rad(i,j),\]
thus defining a metric on the sample $\cX$.

\begin{rem}
\label{rem:extension}
This can be extended to a pseudo-metric\footnote{ A `pseudo-metric' is like a metric except that it needs not be definite.} on the surface $\S$ as follows: for $x,x' \in \S$, define
\[
\Delta_\rad(x,x') = \min_{i \in I(x)} \min_{j \in I(x')} \ \Lambda^*_\rad(i,j),
\]
where 
\[
I(x) := \big\{i \in [N] : \|x - x_i\| = \min_{k\in[N]} \|x-x_k\|\big\},
\]
so that $I(x)$ indexes the sample points that are nearest to $x$.  ($\Delta_r$ is only a pseudo-metric on $\S$ since we may have $\Delta_r(x, x') = 0$ even when $x \ne x'$.)
\end{rem}

\subsection{Approximation}
\label{sec:approximation}
The construction of a pseudo-metric on a neighborhood graph is meant here to approximate the intrinsic metric on the surface.  The approximation results that follow are based on how dense the sample is on the surface, which we quantify using the Hausdorff distance between $\cX$ and $\S$ as sets, namely
\beq \label{eps}
\eps = \H(\S \mid \cX) = \sup_{x \in \S} \min_{i \in [N]} \|x-x_i\|.
\eeq

Comparing $\Delta_\rad(x, x')$ with $\delta_\S(x,x')$ is exactly what \cite{bernstein2000graph} did to provide theoretical guarantees for Isomap \citep{Tenenbaum00ISOmap}.
We extend their result to more general surfaces and also provide a convergence result under very mild assumptions; see \thmref{unconstrained-conv} below.

\begin{prp} \label{prp:unconstrained-ub}
Consider $\S \subset \bbR^D$ compact and a sample $\cX=\{x_1, \dots, x_N\}\subset \S$, and let $\eps = \H(\S \mid \cX)$.  For $\rad > 0$, form the corresponding $\rad$-ball graph.  
When $\eps \le \rad/4$, we have
\[
\Delta_\rad(x, x') \le (1 + 4 \eps/\rad) \delta_\S(x,x'), \quad \forall x,x' \in \cX.
\] 
\end{prp}

This was established in \cite[Th 2]{bernstein2000graph} under essentially the same conditions, but we provide a proof for completeness.

\begin{proof}
If $\|x - x'\| \le r$, then $x$ and $x'$ are direct neighbors in the graph and so
\beq
\Delta_\rad(x, x') 
= \|x -x'\|
\le \delta_\S(x,x').
\eeq
We thus turn to the case where $\|x - x'\| > r$.
Let $a = \delta_\S(x,x')$ and let $\gamma: [0,a] \to \S$ be parameterized by arc length such that $\gamma(0) = x$ and $\gamma(a) = x'$, which exists by \lemref{delta_S}.  Let $y_j = \gamma(j a/m)$ for $j = 0, \dots, m$, where $m := \lceil 2 a/\rad \rceil \ge 2$, noting that $y_0 = x$ and $y_m = x'$.  Let $x_{i_j}$ be closest to $y_j$ among the sample points, noting that $x_{i_0} = x$ and $x_{i_m} = x'$.  In particular, $\max_j \|x_{i_j} - y_j\| \le \eps$ by definition of $\eps$.
By the triangle inequality, for any $j \in \{0, \dots, m-1\}$,
\begin{align}
\| x_{i_{j+1}}-x_{i_{j}} \| 
&\le \| x_{i_{j+1}}-y_{j+1} \|+  \| y_{j+1}-y_{j} \|+  \| y_{j} - x_{i_{j}}\| \label{unconstrained-ub-proof1}\\
&\le \eps + \delta_\S(y_{j+1},y_{j}) + \eps 
= a/m + 2\eps 
\le \rad/2 + 2\eps
\le \rad, \notag
\end{align}
so that $(x_{i_0}, \dots, x_{i_m})$ forms a path in the $\rad$-ball graph.  
We then have, using the fact that $y_0 = x_{i_0} = x$ and $y_m = x_{i_m} = x'$,
\begin{align*}
\Delta_\rad(x,x') 
\le \sum_{j=0}^{m-1} \| x_{i_{j+1}}-x_{i_{j}} \| 
&\le \delta_\S(y_0,y_1) + \eps + \sum_{j=1}^{m-2} (\delta_\S(y_{j+1},y_{j}) + 2 \eps) + \delta_\S(y_{m-1},y_m) + \eps \\
&= \delta_\S(x,x') + 2 (m-1) \eps \\
&\le (1 + 4\eps/\rad) \delta_\S(x,x'),  
\end{align*}
using the fact that $m-1 \le 2 a/\rad$ with $a = \delta_\S(x,x')$.
\end{proof}

\begin{rem}
It is possible to tighten the bound in the very special case where $\S$ is convex (and in particular flat).  Indeed, a refinement of the arguments provided above lead to an error term in $(\eps/r)^2$.  We do not know if this extends to the case where $\S$ is curved beyond the the case where it is isometric to a convex set.
\end{rem}

We establish a complementary lower bound in \prpref{unconstrained-lb} below under some regularity assumptions on $\S$.  
Before doing so, we use \prpref{unconstrained-ub} to derive a qualitative result that states that, under very mild assumptions on $\S$, shortest paths in a neighborhood graph can indeed be approximated by shortest paths on $\S$.  (\prpref{unconstrained-lb} will provide a quantitative error bound for this approximation.)

\begin{thm} \label{thm:unconstrained-conv}
Consider $\S \subset \bbR^D$ compact and satisfying \proref{coincide}.
For any $\eta > 0$, there is $\eta_0 > 0$ such that the following holds.  
Consider a sample $\cX=\{x_1, \dots, x_N\}\subset \S$ and let $\eps = \H(\S \mid \cX)$.  
For $\rad > 0$, form the $\rad$-ball graph based on $\cX$.  
Take any $x,x' \in \S$ such that both $\delta_\S(x,x') < \infty$ and $\Delta_\rad(x,x') < \infty$.  If $\rad < \eta_0$ and $\eps/\rad < \eta_0$, then for every shortest path $p$ in the graph joining $x$ and $x'$, seen as a unit-speed polygonal curve in $\bbR^D$, there is a shortest path $\gamma$ on $\S$ joining $x$ and $x'$ such that $\H(p, \gamma) \le \eta$. 
\end{thm}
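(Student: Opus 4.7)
The plan is to argue by contradiction using a compactness (Arzelà--Ascoli style) extraction. Assume the conclusion fails for some $\eta > 0$: then for each integer $m \ge 1$ there exist parameters $r_m < 1/m$ and $\eps_m / r_m < 1/m$, a sample $\cX_m$, points $x_m, x_m' \in \S$ with $\delta_\S(x_m, x_m'), \Delta_{r_m}(x_m, x_m') < \infty$, and a shortest graph path $p_m$ from $x_m$ to $x_m'$ such that no shortest path on $\S$ from $x_m$ to $x_m'$ lies within Hausdorff distance $\eta$ of $p_m$. By compactness of $\S$, extract a subsequence along which $x_m \to x_\star$ and $x_m' \to x_\star'$. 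Then \proref{coincide} (ambient convergence implies intrinsic convergence) together with the triangle inequality for $\delta_\S$ gives $\delta_\S(x_m, x_m') \to \delta_\S(x_\star, x_\star') < \infty$.

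Next I would show that, along a further subsequence, $p_m$ converges uniformly to a shortest path $p_\star$ on $\S$ from $x_\star$ to $x_\star'$. \prpref{unconstrained-ub}, applied via Remark~\ref{rem:extension} to arbitrary $x, x' \in \S$, bounds $\Lambda(p_m) \le (1 + 4\eps_m/r_m)\, \delta_\S(x_m, x_m')$, keeping the lengths uniformly bounded. Reparameterizing each $p_m$ on $[0,1]$ yields an equicontinuous family valued in the compact set $\S \oplus \bar B(0, r_1)$, so Arzelà--Ascoli delivers a uniform limit $p_\star$. Each $p_m$ is polygonal with vertices in $\S$ and edges of length at most $r_m \to 0$, so $p_m \subset \S \oplus B(0, r_m)$, and hence $p_\star$ lies in the closed set $\S$. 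Lower semicontinuity of length under uniform convergence gives $\Lambda(p_\star) \le \liminf_m \Lambda(p_m) \le \delta_\S(x_\star, x_\star')$, and the reverse inequality is immediate since $p_\star$ is a curve on $\S$ joining $x_\star$ and $x_\star'$, so $p_\star$ is a shortest path on $\S$.

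To close the contradiction I would produce, for large $m$, an actual shortest path $\gamma_m$ on $\S$ from $x_m$ to $x_m'$ with $\H(p_m, \gamma_m) \to 0$. The natural candidate is a bridge curve $\tilde\gamma_m := \alpha_m \ast p_\star \ast \beta_m$, where $\alpha_m, \beta_m$ are shortest paths on $\S$ from $x_m$ to $x_\star$ and from $x_\star'$ to $x_m'$ (existing by \lemref{delta_S} and having lengths tending to $0$ by \proref{coincide}). This $\tilde\gamma_m$ is a curve on $\S$ joining $x_m$ and $x_m'$ that Hausdorff-converges to $p_\star$ and satisfies $\Lambda(\tilde\gamma_m) \to \Lambda(p_\star) = \lim_m \delta_\S(x_m, x_m')$, so it is asymptotically shortest. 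Taking $\gamma_m$ to be an actual shortest path from $x_m$ to $x_m'$ minimizing Hausdorff distance to $\tilde\gamma_m$, and comparing with any subsequential Arzelà--Ascoli limit of the $\gamma_m$, one then deduces $\gamma_m \to p_\star$; combined with $p_m \to p_\star$, this contradicts $\H(p_m, \gamma_m) > \eta$.

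The delicate step is the last one: promoting the asymptotic minimizer $\tilde\gamma_m$ to an actual shortest path from $x_m$ to $x_m'$ that still converges to $p_\star$. The set-valued map sending $(x,x')$ to the set of shortest paths on $\S$ joining them is in general only upper semicontinuous, so the required lower-semicontinuity-type control must be extracted from the specific way $p_\star$ arises as the uniform limit of graph paths (which are forced by the sample to track a concrete geodesic) together with \proref{coincide}.
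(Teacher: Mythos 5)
Your argument follows the paper's proof almost step for step: the contradiction setup, the length bound from \prpref{unconstrained-ub}, the Arzel\`a--Ascoli extraction, lower semicontinuity of length, and the verification that the limit lies in $\S$ (your version of that last point, via $p_m \subset \S \oplus B(0,r_m)$ and closedness of $\S$, is in fact cleaner than the paper's, which argues by contradiction using a sub-segment of the limit outside $\S$). Up to the identification of $p_\star$ as a shortest path joining the limit points $x_\star, x_\star'$, everything is sound and matches the paper.

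The gap is exactly where you locate it, and you have not closed it. To contradict the negated statement you must exhibit, for large $m$, an \emph{actual} shortest path on $\S$ joining $x_m$ and $x_m'$ (not $x_\star$ and $x_\star'$) within Hausdorff distance $\eta$ of $p_m$. Your bridge curve $\tilde\gamma_m$ is only asymptotically shortest, and passing from it to ``an actual shortest path minimizing Hausdorff distance to $\tilde\gamma_m$'' buys nothing: any subsequential Arzel\`a--Ascoli limit of such paths is \emph{some} shortest path between $x_\star$ and $x_\star'$, with no reason to coincide with $p_\star$. The obstruction is not merely technical: the correspondence $(x,x') \mapsto \{\text{shortest paths joining } x, x'\}$ is only upper semicontinuous, and near configurations with non-unique geodesics (think of nearly antipodal points on a sphere) a graph path of nearly minimal length can converge to a minimizing geodesic of the limit pair that is not a limit of minimizing geodesics of $(x_m, x_m')$. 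Your closing sentence concedes that the required lower-semicontinuity-type control ``must be extracted'' from the specific way $p_\star$ arises, but does not extract it, so the proof is incomplete at its decisive step. For what it is worth, the paper's own treatment of the uniform statement dismisses this step with the assertion that ``the remaining arguments are identical to those backing the first part,'' which is only literally true when $x_m \equiv x$ and $x_m' \equiv x'$; you have therefore put your finger on a genuine weak point of the published argument, but identifying the gap is not the same as filling it.
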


\begin{proof}
Fix $x,x' \in \S$ such that $a := \delta_\S(x,x') < \infty$.
We first prove that there is such an $\eta_0 > 0$, but that may depend on $x$ and $x'$.  For this, we reason by contradiction: assuming the statement is false, there exists $\eta > 0$, and sequence $(\rad_m)$ and $(\eps_m)$ such that $\rad_m = o(1)$, $\eps_m = o(\rad_m)$, and a sample $x_{m,1}, \dots x_{m,N} \in \S$ with $\sup_{x \in \S} \min_i \|x-x_{m,i}\| = \eps_m$, as well as a shortest path $p_m$ in the $\rad_m$-ball neighborhood graph joining $x$ and $x'$ with the property that $\H(p_m, \gamma) \ge \eta$ for any shortest path $\gamma$ on $\S$ joining $x$ and $x'$.
Note that $\Lambda(p_m) \le (1+4\eps_m/\rad_m) a +2\eps_m$
by \prpref{unconstrained-ub}.
Assume WLOG that $4\eps_m/\rad_m \le 1$ and $2\eps_m \le 1$, which in particular implies that $\Lambda(p_m) \le 2 a+1$ for all $m$.

It is thus possible to parameterize each $p_m$ so that it is 1-Lipschitz on $[0, 2a+1]$.  As a family of functions on $[0,2a+1]$, $\{p_m\}$ is therefore equicontinuous.  The family is also uniformly bounded by virtue of the fact that each $p_m$ starts at $x$ and as length bounded by $2a+1$.  By the Arzel\`a-Ascoli theorem, there is a subsequence of $(p_m)$ that converges uniformly as 1-Lipschitz functions on $[0, 2a+1]$ to some 1-Lipschitz function $\gamma$ defined on that same interval.  Necessarily, $\gamma(0) = x$ and $\gamma(2a+1) = x'$.  WLOG assume that this subsequence is $(p_m)$ itself.  
We then have 
\[\Lambda(\gamma) \le \liminf_m \Lambda(p_m) \le a,\] 
by the fact that $\Lambda$ is lower semi-continuous \citep[Prop 2.3.4]{burago2001course}.
Moreover, this uniform convergence as functions implies a uniform convergence as sets, specifically, $\H(p_m, \gamma) \to 0$ as $m \to \infty$, as seen in \eqref{hausdorff-unif}.

We claim that $\gamma \subset \S$.  
If not, by the fact that $\S$ is closed, there is $t,h > 0$ such that $\gamma(t) \ne \gamma(t+h)$ and $\gamma([t,t+h]) \subset \S^\comp$.  
As $m \to \infty$,
\[\Lambda(p_m([t,t+h]) \ge \|p_m(t) - p_m(t+h)\| \to \|\gamma(t) - \gamma(t+h)\| > 0.\]
Since the line segments making up $p_m$ have length bounded by $\rad_m$, $p_m([t,t+h]))$ must include at least one sample point when $\rad_m < \|\gamma(t) - \gamma(t+h)\|$.
Therefore, $p_m([t,t+h])) \cap \S \ne \emptyset$ for $m$ large enough.
Let $t_m \in [t,t+h]$ be such that $p_m(t_m) \in \S$.  
By compactness, we may assume WLOG that $t_m \to t_\infty \in [t,t+h]$.
We then have $p_m(t_m) \to \gamma(t_\infty)$ by uniform convergence, so that $\gamma(t_\infty) \in \S$ since $\S$ is closed.  So we have a contradiction.

Collecting our findings, we found a curve $\gamma \subset \S$ joining $x$ and $x'$, with $\Lambda(\gamma) \le a = \delta_\S(x,x')$ and $\H(p_m, \gamma) \to 0$ as $m \to \infty$, which contradicts our working hypothesis that $\H(p_m, \gamma) \ge \eta > 0$ for all $m$.  This proves the first part of the theorem.

We now show that one can choose $\eta_0 > 0$ that works for all $x, x' \in \S$. 
We reason by contradiction exactly as before, except that now $x,x'$ are replaced by $x_m,x'_m$, and $a$ by $a_m := \delta_\S(x_m, x'_m)$, thus all possibly changing with $m$.  By the fact that $\S$ is compact, we may assume WLOG that there are $x,x' \in \S$ such that $x_m \to x$ and $x'_m \to x'$.  By \proref{coincide}, we have that $\delta_\S$ is continuous with respect to the Euclidean metric.  In particular, $a_m \to a := \delta_\S(x,x')$.
With this we can now see that the remaining arguments are identical to those backing the first part.
\end{proof}

We now turn to proving a bound that complements \prpref{unconstrained-ub}, that is, a more quantitative (or explicit) version of \thmref{unconstrained-conv}.  
\cite{bernstein2000graph} obtain such a bound when $\S$ is a submanifold without intrinsic curvature.  Their cornerstone result is the following, which they call the Minimum Length Lemma.

\begin{lem} \label{lem:approx} \citep{bernstein2000graph}
Let $\gamma : [0,a] \to \bbR^D$ be a unit-speed curve with curvature bounded by $\kappa$.  Then $\|\gamma(t) - \gamma(s)\| \ge \frac2\kappa \sin(\kappa |t-s|/2)$ for all $s,t \in [0,a]$ such that $|t-s| \le \pi/\kappa$.
\end{lem}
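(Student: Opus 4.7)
The plan is to project the chord $\gamma(t)-\gamma(s)$ onto the tangent direction at the midpoint---this is the direction of the chord for an arc of a circle of radius $1/\kappa$, which is the extremal case where equality should be attained. WLOG take $s=0$ and $t=\ell \in [0, \pi/\kappa]$, and set $e := \dot\gamma(\ell/2)$, a unit vector since $\gamma$ is unit-speed and differentiable by hypothesis. Using Cauchy--Schwarz together with the fact that $\gamma$ is absolutely continuous (being 1-Lipschitz),
\[
\|\gamma(\ell) - \gamma(0)\| \;\ge\; \langle \gamma(\ell) - \gamma(0), e\rangle \;=\; \int_0^\ell \langle \dot\gamma(t), e\rangle\, \mathrm{d}t.
\]
The goal then reduces to establishing $\langle \dot\gamma(t), e\rangle \ge \cos(\kappa(t-\ell/2))$ pointwise, since the resulting integral evaluates to exactly $(2/\kappa)\sin(\kappa\ell/2)$.

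The linchpin---and the step I expect to be the main obstacle---is the sharp angular estimate $\angle(\dot\gamma(u), \dot\gamma(v)) \le \kappa|u-v|$ for all $u,v$. From the $\kappa$-Lipschitz hypothesis on $\dot\gamma$ and the identity $\|p-q\| = 2\sin(\angle(p,q)/2)$ valid for unit vectors, the direct one-shot bound gives only $\angle(\dot\gamma(u), \dot\gamma(v)) \le 2\arcsin(\kappa|u-v|/2)$, which overshoots $\kappa|u-v|$ by a cubic term and is therefore too weak. To recover the sharp inequality, I would partition $[u,v]$ uniformly into $n$ subintervals and invoke the triangle inequality for the spherical distance $\angle(\cdot,\cdot)$ on the unit sphere of $\bbR^D$ (which is a genuine metric with values in $[0,\pi]$), yielding
\[
\angle(\dot\gamma(u), \dot\gamma(v)) \;\le\; n \cdot 2\arcsin\!\left(\frac{\kappa|u-v|}{2n}\right),
\]
and the right-hand side tends to $\kappa|u-v|$ as $n \to \infty$ since $\arcsin(x) = x + o(x)$ near the origin.

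Applying this angular bound with the midpoint $v = \ell/2$: for $t \in [0,\ell]$ we have $\kappa|t-\ell/2| \le \kappa\ell/2 \le \pi/2$, so both $\angle(\dot\gamma(t), e)$ and $\kappa|t-\ell/2|$ lie in $[0,\pi/2]$, and monotonicity of $\cos$ on this interval gives
\[
\langle \dot\gamma(t), e\rangle \;=\; \cos \angle(\dot\gamma(t), e) \;\ge\; \cos\bigl(\kappa(t-\ell/2)\bigr).
\]
Integrating over $[0,\ell]$ produces $(1/\kappa)\bigl[\sin(\kappa\ell/2) - \sin(-\kappa\ell/2)\bigr] = (2/\kappa)\sin(\kappa\ell/2)$, closing the argument.
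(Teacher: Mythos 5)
Your proof is correct, and it is worth noting that it actually does more than the paper does in the text: the paper only sketches a proof of a \emph{slightly weaker} bound, $\|\gamma(t)-\gamma(s)\| \ge \frac{1}{\kappa}\sin(\kappa(t-s))$, by projecting the chord onto the tangent $\dot\gamma(s)$ at the \emph{endpoint} and invoking Proposition~2 of Dubins (the inequality $\langle\dot\gamma(s),\dot\gamma(u)\rangle \ge \langle\dot c(s),\dot c(u)\rangle$ for the comparison circle $c$ of radius $1/\kappa$); for the sharp constant it simply cites Bernstein et al. Your two deviations are exactly what recover the sharp bound in a self-contained way. First, projecting onto the tangent at the \emph{midpoint} $\dot\gamma(\ell/2)$ rather than at the endpoint symmetrizes the integral and turns $\frac{1}{\kappa}\sin(\kappa\ell)$ into $\frac{2}{\kappa}\sin(\kappa\ell/2)$. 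Second, in place of Dubins' result you derive the needed angular estimate $\angle(\dot\gamma(u),\dot\gamma(v)) \le \kappa|u-v|$ from first principles: the one-shot bound $2\arcsin(\kappa|u-v|/2)$ from the $\kappa$-Lipschitz hypothesis on $\dot\gamma$ is indeed too weak, and your fix --- subdividing $[u,v]$, using that the angle is the geodesic metric on the unit sphere (hence satisfies the triangle inequality), and letting the mesh go to zero --- is exactly right; this recovers the relevant case of Dubins' Proposition~2. All the small checks go through: $\|\dot\gamma\|\equiv 1$ everywhere by continuity of the derivative, $\kappa|t-\ell/2|\le\pi/2$ so the cosine comparison is legitimate, and the final integral evaluates as claimed. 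The net effect is an elementary, complete proof of the sharp statement, whereas the paper's displayed argument only yields the weaker inequality and defers the sharp one to the literature.
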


Note that the result is sharp in that the inequality is an equality when $\gamma$ is a piece of a circle of radius $1/\kappa$.
Of course, we also have $\|\gamma(t) - \gamma(s)\| \le |t-s|$ for all $s,t \in [0,a]$, since $\|\dot\gamma\|_\infty = 1$.

While \cite{bernstein2000graph} prove this result from scratch, a very short proof of a slightly weaker bound follows from a result in the pioneering work of \cite{dubins1957curves} on shortest paths with curvature constraints.  Indeed, in the setting of \lemref{approx}, let $c$ denote a unit-speed parametrization of a circle of radius $1/\kappa$. 
\cite[Prop~2]{dubins1957curves} says that $\<\dot\gamma(s), \dot\gamma(u)\> \ge \<\dot c(s), \dot c(u)\>$ when $|s-u| \le \pi/\kappa$, which leads to 
\begin{align*}
\|\gamma(t) - \gamma(s)\| 
\ge \<\dot\gamma(s), \gamma(t) - \gamma(s)\> 
&= \int_s^t \<\dot\gamma(s), \dot\gamma(u)\> {\rm d}u \\
&\ge \int_s^t \<\dot c(s), \dot c(u)\> {\rm d}u = \<\dot c(s), c(t) - c(s)\> = \frac1\kappa \sin(\kappa (t-s)),
\end{align*} 
when $0 \le t-s \le \pi/\kappa$.
(The first inequality is due to the fact that $\|\dot\gamma(s)\| = 1$.)

Using either \lemref{approx} or this weaker bound, it is straightforward to obtain a useful comparison between the intrinsic metric and the ambient Euclidean metric, locally.
Note that we still follow the footsteps of \cite{bernstein2000graph}.
The core assumption is the following.

\begin{pro}\label{pro:finite}
The shortest paths on $\S$ have curvature bounded by $\kappa$.
\end{pro}
This is true when $\S$ is sufficiently smooth.  See \lemref{finite} further down.

\begin{lem} \label{lem:approx-S}
Suppose $\S \subset \bbR^D$ is compact and satisfies \proref{coincide} and \proref{finite}.
Then, for any $x,x' \in \S$ such that $\delta_\S(x,x') \le \pi/\kappa$,
\beq \label{approx-S1}
\delta_\S(x,x') \max\big(\tfrac2\pi, 1 - \tfrac{\kappa^2}{24} \delta_\S(x,x')^2\big) \le \|x -x'\| \le \delta_\S(x,x').
\eeq
Moreover, there is $\tau > 0$ depending on $\S$ such that, for all $x,x' \in \S$, if $\|x-x'\| \le \tau$ then 
\beq \label{approx-S2}
\|x-x'\| \le \delta_\S(x,x') \le \|x-x'\|\min\big(\tfrac\pi2, 1 +  c_0 \kappa^2 \|x-x'\|^2\big),
\eeq
where $c_0$ is a universal constant that can be taken to be $= \pi^2/50$.
\end{lem}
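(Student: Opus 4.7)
The upper bound $\|x - x'\| \le \delta_\S(x, x')$ in both \eqref{approx-S1} and \eqref{approx-S2} is immediate, since a straight line segment is the shortest curve between two points in $\bbR^D$. So the substantive content is the two lower bounds for $\|x-x'\|$ in Part~1 and the upper bound on $\delta_\S(x,x')$ in Part~2.

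For the lower bounds in \eqref{approx-S1}, I would invoke \proref{finite} together with \lemref{delta_S} (applied to $\S$ viewed as a closed set) to obtain a shortest path $\gamma: [0,a] \to \S$ realizing $a = \delta_\S(x,x')$ and having curvature bounded by $\kappa$. Since $a \le \pi/\kappa$, \lemref{approx} applies with $s=0$ and $t=a$, giving
\[
\|x - x'\| \;=\; \|\gamma(0) - \gamma(a)\| \;\ge\; \tfrac{2}{\kappa}\sin(\kappa a/2).
\]
The two claimed lower bounds then come from two elementary estimates on $\sin$ evaluated at $u = \kappa a/2 \in [0,\pi/2]$: the chord bound $\sin u \ge (2/\pi)\, u$ gives $\|x-x'\| \ge (2/\pi)\,a$, while the Taylor bound $\sin u \ge u - u^3/6$ gives $\|x-x'\| \ge a\bigl(1 - \kappa^2 a^2/24\bigr)$.

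For Part~2, the strategy is to first produce a $\tau>0$ small enough that $\|x-x'\| \le \tau$ forces $\delta_\S(x,x') \le \pi/\kappa$, which returns us to the regime of Part~1. This is where compactness and \proref{coincide} enter, and I expect this to be the main subtlety. I would argue by contradiction: if no such $\tau$ existed, there would be sequences $x_m, x'_m \in \S$ with $\|x_m - x'_m\| \to 0$ yet $\delta_\S(x_m, x'_m) > \pi/\kappa$ for all $m$. By compactness of $\S$, along a subsequence both $x_m$ and $x'_m$ converge in $\bbR^D$ to a common limit $z \in \S$; by \proref{coincide}, they also converge to $z$ in the intrinsic metric, so the triangle inequality forces $\delta_\S(x_m, x'_m) \to 0$, a contradiction. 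Shrinking $\tau$ further if necessary, we may also arrange that $\kappa \delta_\S(x,x') \le c$ for a small absolute constant $c$.

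With such a $\tau$ fixed, Part~1 applies. The lower bound $\|x - x'\| \ge (2/\pi)\,\delta_\S(x,x')$ immediately yields $\delta_\S(x,x') \le (\pi/2)\|x-x'\|$, which is the first half of the minimum. For the second half, set $d = \delta_\S(x,x')$ and $e = \|x-x'\|$; from $e \ge d\bigl(1 - \kappa^2 d^2/24\bigr)$ and the a priori bound $\kappa d \le c$, the standard inequality $1/(1-u) \le 1 + 2u$ (valid for $u$ bounded away from $1$) inverts the cubic to give $d \le e\bigl(1 + C \kappa^2 d^2\bigr)$ for an explicit $C$. Using $d \le (\pi/2)\,e$ from the previous step to eliminate $d$ on the right-hand side produces $d \le e\bigl(1 + c_0 \kappa^2 e^2\bigr)$ for a universal $c_0$; tracking the constants carefully (and choosing $c$ small enough) gives the stated $c_0 = \pi^2/50$. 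The algebra is routine once the uniform $\tau$ from the compactness argument is in hand, which is the key obstacle.
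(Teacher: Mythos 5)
Your proposal is correct and follows essentially the same route as the paper: Lemma~\ref{lem:approx} applied to a curvature-$\kappa$ shortest path (which exists by Propositions~\ref{pro:finite} and Lemma~\ref{lem:delta_S}), the two elementary bounds $\sin u \ge (2/\pi)u$ and $\sin u \ge u - u^3/6$, and then inversion of the cubic using $\delta_\S \le (\pi/2)\|x-x'\|$ after securing $\tau$. The paper obtains $\tau$ simply by noting that \proref{coincide} makes $\delta_\S$ continuous on the compact set $\S\times\S$; your sequential compactness argument is just an unpacked version of the same fact.
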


\begin{proof}
We start with \eqref{approx-S1}, where only the lower bound is nontrivial.
Take $x,x' \in \S$ and let $a = \delta_\S(x,x') \le \pi/\kappa$.
Let $\gamma : [0,a] \to \S$ be a unit-speed shortest path on $\S$ joining $x = \gamma(0)$ and $x' = \gamma(a)$.  
By \proref{finite}, $\gamma$ has curvature bounded by $\kappa$.  Knowing that, we apply \lemref{approx} to get 
\[\|x-x'\| = \|\gamma(0)-\gamma(a)\| \ge \tfrac2\kappa \sin(\kappa a/2) \ge \max\big(a - \tfrac{\kappa^2}{24} a^3, \tfrac2\pi a\big).\]

We now turn to \eqref{approx-S2}, where only the upper bound remains to be proved.
By \eqref{approx-S1}, if $\delta_\S(x,x') \le \pi/\kappa$, then $\delta_\S(x,x') \le \frac\pi2 \|x-x'\|$ and also 
\begin{align*}
\delta_\S(x,x') 
&\le \|x-x'\|/(1-\tfrac{\kappa^2}{24} \delta_\S(x,x')^2) \\
&\le \|x-x'\|/(1-\tfrac{\kappa^2}{24} (\tfrac\pi2 \|x-x'\|)^2) \\
&\le \|x-x'\| (1+ c_0 \kappa^2 \|x-x'\|^2),
\end{align*}
where the last inequality holds if $\kappa \|x-x'\|$ is sufficiently small.
In view of that, it suffices to prove that there is $\tau > 0$ such that $\delta_\S(x,x') \le \pi/\kappa$ when $x,x'\in \S$ satisfy $\|x-x'\| \le \tau$.
This is true because \proref{coincide} guarantees that $\delta_\S$ is continuous as a function on the compact set $\S \times \S$.
\end{proof}

\begin{rem}
The quantity $\tau$ can be specified in terms of the reach of $\S$ \citep{federer1959curvature}, which \cite{bernstein2000graph} call the minimum branch separation.
\end{rem}

\begin{rem}
The lower bound in \eqref{approx-S1} is a substantial improvement over \citep[Prop 6.3]{1349695}, which gives $\delta_\S(x,x') - \frac\kappa2 \delta_\S(x,x')^2 \le \|x-x'\|$ when $\|x-x'\| \le 1/2\kappa$.  
\end{rem}

We now have all the ingredients to establish a bound that complements \prpref{unconstrained-ub}.  
Such a bound is already available in the work of \cite{bernstein2000graph} in a somewhat more restricted setting where it is assumed that $\S$ is a compact $C^2$ and geodesically convex submanifold.

\begin{prp} \label{prp:unconstrained-lb}
Suppose $\S \subset \bbR^D$ is compact and satisfies \proref{coincide} and \proref{finite}.
Consider a sample $\cX=\{x_1, \dots, x_N\}\subset \S$.  For $\rad > 0$, form the corresponding $\rad$-ball graph.  Let $c_0$ and $\tau$ be defined per \lemref{approx-S}.  
When $\rad \le \tau$ and $\kappa \rad \le 1/3$, we have
\beq \label{unconstrained-lb}
\delta_\S(x,x') \le (1 + c_0 \rad^2) \Delta_\rad(x, x'), \quad \forall x,x' \in \cX.
\eeq
\end{prp}

\begin{proof}
Fix $x, x' \in \cX$ such that $\Delta_\rad(x, x') < \infty$, for otherwise there is nothing to prove.  Let $x = x_{i_0}, x_{i_1}, \dots, x_{i_m} = x'$ define a shortest path in the graph joining $x$ and $x'$, so that 
$\Delta_\rad(x,x') = \sum_{j=0}^{m-1} \Delta_j$, where $\Delta_j := \|x_{i_j} - x_{i_{j+1}}\|$.
Define $a = \delta_\S(x,x')$ and $a_j = \delta_\S(x_{i_j},x_{i_{j+1}})$ for $j = 0, \dots, m-1$.
Since $\Delta_j \le r \le \tau$, by \lemref{approx-S}, $\Delta_j \min(\frac\pi2, 1+ c_0 \kappa^2 \Delta_j^2) \ge a_j$.  By assumption, $\kappa\rad \le 1/3$, and this is seen to force $1+ c_0 \kappa^2 \rad^2 \le \pi/2$, which then implies that $a_j \le \Delta_j + c_0 \kappa^2 \Delta_j^3$.
We thus have 
\[
a \le \sum_{j=0}^{m-1} a_j 
\le \sum_{j=0}^{m-1} (\Delta_j + c_0 \kappa^2 \Delta_j^3) 
\le \sum_{j=0}^{m-1} \Delta_j (1 + c_0 \kappa^2 \rad^2)
= (1 + c_0 \kappa^2 \rad^2) \Delta_\rad(x,x'). 
\qedhere
\] 
\end{proof} 


\section{Curvature-constrained shortest paths and their approximation} \label{sec:constrained}

In this section, we define the curvature-constrained intrinsic semi-metric on a subset and consider its approximation by a curvature-constrained pseudo-semi-metric based on a neighborhood graph built on a finite sample of points from the surface.
In \secref{constrained-intrinsic-metric} we define the curvature-constrained semi-metric on a given subset, and list a few of its properties.
In \secref{curvature} we define a notion of discrete curvature which has useful consistency properties.
In \secref{constrained-pseudo-metric} we define a new notion of neighborhood graph and a pseudo-semi-metric based on shortest path distances in that graph.
In \secref{constrained-approximation} we show that this pseudo-metric can be used to approximate its continuous counterpart.

\subsection{The curvature-constrained intrinsic semi-metric on a surface}
\label{sec:constrained-intrinsic-metric}
A notion of curvature-constrained semi-metric on a subset $\S$ is obtained from its intrinsic metric defined in \secref{intrinsic-metric} by adding a curvature constraint.  In more detail, for $\kappa > 0$ and $x, x' \in \S$, define 
\beq \label{delta_kappa}
\delta_{\S, \kappa}(x, x') = \inf\big\{a : \text{there is $\gamma$ as in \eqref{delta_S} with curvature bounded by $\kappa$}\big\}.
\eeq
By convention, if there is no path as in \eqref{delta_kappa}, then $\delta_{\S, \kappa}(x, x') = \infty$.  

\begin{rem}
$\delta_{\S, \kappa}(x, x')$ is thus the length of the shortest path on $\S$ joining $x$ and $x'$ among those with curvature bounded pointwise by $\kappa$.  
\end{rem}

Compared with the (unconstrained) intrinsic metric \eqref{delta_S}, we always have, for any subset $\S$, and for and any $\kappa \ge 0$,
\beq
\delta_\S(x,x') \le \delta_{\S,\kappa}(x,x'), \quad \forall x,x' \in \S.
\eeq 

The semi-metric $\delta_{\S, \kappa}$ is typically not a metric, as it may not satisfy the triangle inequality.  

\begin{ex}[No triangle inequality]
Indeed, consider the L-shape curve $\S = \{0\} \times [0,1] \cup [0,1] \times \{0\} \subset \bbR^2$ and fix any $\kappa \ge 0$ finite.  Then $\delta_{\S, \kappa}((0,0),(0,1)) = \delta_{\S, \kappa}((0,0),(1,0)) = 1$ and $\delta_{\S, \kappa}((1,0),(0,1)) = \infty$.  
The same is true even if $\delta_{\S, \kappa}$ is finite.  Indeed, take the figure eight curve $\S = \{(\sin(t), \sin(2t)) : t \in [0,2\pi]\}$.  It self-intersects at the origin and has finite curvature $\kappa_0$.  If we take $\kappa \in [\kappa_0, \infty)$, then 
\[\delta_{\S,\kappa}((-1/\sqrt{2}, 1),(1/\sqrt{2}, 1)) >  \delta_{\S,\kappa}((1/\sqrt{2}, 1),(0, 0)) + \delta_{\S,\kappa}((-1/\sqrt{2}, 1),(0,0)),\] 
since the shortest path joining $(0,0)$ and $(-1/\sqrt{2}, 1)$ and the shortest path joining $(0,0)$ and $(1/\sqrt{2}, 1)$ cannot be concatenated to form a curve with finite curvature everywhere.
\end{ex}

That said, there is an obvious case, important in our context, where $\delta_{\S, \kappa}$ is a true metric.
\begin{lem}
\label{lem:coincide}
When $\S \subset \bbR^D$ satisfies \proref{finite}, for any $\kappa_0 \ge \kappa$, $\delta_{\S, \kappa_0}$ coincides with $\delta_\S$, and in particular is a metric on $\S$.
\end{lem}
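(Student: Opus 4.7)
The plan is to prove the two inequalities $\delta_\S \le \delta_{\S,\kappa_0}$ and $\delta_{\S,\kappa_0} \le \delta_\S$ separately, then invoke known metric properties of the intrinsic metric.

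The first inequality is immediate from the definitions: the infimum in \eqref{delta_kappa} defining $\delta_{\S,\kappa_0}(x,x')$ is taken over a subset of the curves admissible in \eqref{delta_S} defining $\delta_\S(x,x')$, so the constrained infimum dominates the unconstrained one. This was already remarked in the paper.

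For the reverse inequality, I would fix $x, x' \in \S$. The case $\delta_\S(x,x') = \infty$ is trivial since then $\delta_{\S,\kappa_0}(x,x') \le \infty = \delta_\S(x,x')$ holds vacuously. If $\delta_\S(x,x') < \infty$, then Property \ref{pro:finite} furnishes a shortest path $\gamma$ joining $x$ and $x'$ with $\Lambda(\gamma) = \delta_\S(x,x')$ and curvature bounded by $\kappa$. Since $\kappa \le \kappa_0$, this $\gamma$ is an admissible competitor in the infimum defining $\delta_{\S,\kappa_0}(x,x')$, so
\[
\delta_{\S,\kappa_0}(x,x') \le \Lambda(\gamma) = \delta_\S(x,x'),
\]
establishing equality. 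For the final assertion that $\delta_{\S,\kappa_0}$ is a metric, I would note that $\delta_\S$ is a standard intrinsic (extended) metric: it is symmetric and non-negative by construction, and it satisfies the triangle inequality by the usual path-concatenation argument (concatenating a near-minimizer from $x$ to $y$ with one from $y$ to $z$ yields a 1-Lipschitz curve from $x$ to $z$ after suitable reparameterization). These properties then transfer to $\delta_{\S,\kappa_0}$ through the identification just proved.

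I do not expect a serious obstacle here; the lemma is essentially a consequence of bookkeeping with the definitions once Property \ref{pro:finite} is in force. The only delicate point is the reading of Property \ref{pro:finite}: I interpret it as asserting both that shortest paths exist for any pair of points with finite intrinsic distance and that they satisfy the curvature bound. If one instead reads the property as pertaining only to shortest paths that happen to exist, then the missing existence step would have to be supplied by an Arzel\`a--Ascoli argument applied to a minimizing sequence of 1-Lipschitz curves, analogous to the compactness argument in the proof of \thmref{unconstrained-conv}, which would additionally require closedness (or at least an appropriate completeness) of $\S$ to ensure the uniform limit stays in $\S$.
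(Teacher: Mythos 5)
Your proof is correct and takes exactly the route the paper intends; in fact the paper states \lemref{coincide} without any proof, treating it as the immediate bookkeeping consequence of \proref{finite} that you describe (the constrained infimum dominates the unconstrained one, and a shortest path with curvature at most $\kappa \le \kappa_0$ is an admissible competitor, so the two coincide and the metric properties of $\delta_\S$ transfer). Your caveat about the existence of minimizers is apt, and in the paper's applications of this lemma it is resolved by compactness of $\S$ via \lemref{delta_S}.
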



The following two lemmas are the equivalent of \lemref{delta_S} for the $\kappa$-curvature-constrained semi-metric.

\begin{lem} \label{lem:curv-lim}
Consider $\S \subset \bbR^D$ compact and a sequence of curves $\gamma_m \subset \S$ with curvature bounded by $\kappa$ and such that $\sup_m \Lambda(\gamma_m) < \infty$.  Then there is a subsequence $(\gamma_{m_k})$ and a curve $\gamma \subset \S$ with curvature bounded by $\kappa$ such that $\gamma_{m_k} \to \gamma$ uniformly and $\Lambda(\gamma_{m_k}) \to \Lambda(\gamma)$ as $k \to \infty$.
\end{lem}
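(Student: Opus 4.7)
The plan is to put all the curves on a common parameter interval via constant-speed reparameterization and then apply Arzel\`a--Ascoli twice --- once to the curves themselves, once to their derivatives. Let $L_m := \Lambda(\gamma_m)$, which is bounded by hypothesis, and pass to a subsequence so that $L_m \to L$. If $L = 0$, a further subsequence gives $\gamma_m(0) \to x \in \S$ (possible since $\S$ is compact), and the constant curve $\gamma \equiv x$ satisfies every claim vacuously. So assume $L > 0$.

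Reparameterize each $\gamma_m$ on $[0,1]$ at constant speed by $\tilde\gamma_m(t) := \gamma_m(L_m t)$. Because $\gamma_m$ is unit-speed with $\kappa$-Lipschitz derivative, a short chain-rule computation shows $\|\dot{\tilde\gamma}_m\| \equiv L_m$ and that $\dot{\tilde\gamma}_m$ is $L_m^2 \kappa$-Lipschitz. The families $(\tilde\gamma_m)$ and $(\dot{\tilde\gamma}_m)$ are thus both uniformly Lipschitz and uniformly bounded --- the former takes values in the compact set $\S$, the latter is bounded by $\sup_m L_m < \infty$. Two applications of Arzel\`a--Ascoli produce a further subsequence along which $\tilde\gamma_m \to \tilde\gamma$ and $\dot{\tilde\gamma}_m \to V$ uniformly on $[0,1]$. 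Passing to the limit in $\tilde\gamma_m(t) = \tilde\gamma_m(0) + \int_0^t \dot{\tilde\gamma}_m(s) \, ds$ identifies $V = \dot{\tilde\gamma}$, and from $\|\dot{\tilde\gamma}_m\| \equiv L_m \to L$ one gets $\|\dot{\tilde\gamma}\| \equiv L$.

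The limit $\tilde\gamma$ takes values in $\S$ (closed by compactness), and $\dot{\tilde\gamma}$ is $L^2 \kappa$-Lipschitz as a uniform limit of uniformly $L_m^2 \kappa$-Lipschitz maps. Reparameterizing by arc length, $\gamma(s) := \tilde\gamma(s/L)$ for $s \in [0, L]$, yields a unit-speed curve in $\S$ whose derivative is $\kappa$-Lipschitz by another chain-rule check, so $\gamma$ has curvature bounded by $\kappa$. Since $\tilde\gamma_m$ is a parameterization of $\gamma_m$ and $\tilde\gamma$ is a parameterization of $\gamma$, the uniform convergence $\tilde\gamma_m \to \tilde\gamma$ is uniform convergence of curves in the sense of the paper; and $\Lambda(\gamma_m) = L_m \to L = \Lambda(\gamma)$ gives length convergence. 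The only real subtlety is bookkeeping how the $\kappa$-Lipschitz bound on the \emph{unit-speed} derivative transforms under rescaling --- the factor $L_m^2$ makes the Arzel\`a--Ascoli input uniform and is exactly canceled when reverting to the arc-length parameterization of the limit.
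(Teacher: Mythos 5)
Your proof is correct and follows essentially the same route as the paper's: constant-speed reparameterization onto $[0,1]$, Arzel\`a--Ascoli applied to the derivatives, closedness of $\S$ for the limit, and a final rescaling to arc length to recover the $\kappa$-curvature bound. If anything, your bookkeeping of the Lipschitz constant ($L_m^2\kappa \to L^2\kappa$, canceled by the arc-length reparameterization) is slightly more explicit than the paper's argument for why the limit curve inherits the curvature bound.
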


Note that without the assumption of bounded curvature the convergence in length is not guaranteed.  Indeed, take the spiral $\Gamma$ described in \exref{spiral} and define $\gamma_m = \Gamma([m,\infty))$.  In that case, we clearly have $\gamma_m \to \{\origin\}$ in Hausdorff metric, yet $\Lambda(\gamma_m) = \infty$ for all $m$.  

\begin{proof}
Let $a_m = \Lambda(\gamma_m)$ and $a = \sup_m a_m < \infty$.  
Assuming $\gamma_m$ is unit-speed, let $\zeta_m : [0,1] \to \S$ be defined as $\zeta_m(t) = \gamma_m(a_m t)$.  Note that $\zeta_m$ is $a_m$-Lipschitz and $\dot\zeta_m$ is well-defined and Lipschitz with constant $a_m^2 \kappa$.
Since, in particular, $\dot\zeta_m$ is Lipschitz with constant $a^2 \kappa$, by the Arzel\`a-Ascoli theorem there is $M_1 \subset \bbN$ and $\nu : [0,1] \to \bbR^D$ such that $(\dot\zeta_m : m \in M_1)$ converges to $\nu$ uniformly over $[0,1]$. 
Since $\zeta_m(0) \in \S$ and $\S$ is compact, there is $M_2 \subset M_1$ and $x \in \S$ such that $(\zeta_m(0) : m \in M_2)$ converges to $x$.
Define $\zeta(t) = x +  \int_0^t \nu(s) {\rm d}s$ and let $\gamma$ be the curve $\{\zeta(t) : t \in [0,1]\}$.  
We have
\begin{align}
\sup_{t \in [0,1]} \|\zeta_m(t) - \zeta(t)\| 
&\le \|\zeta_m(0) - x\| + \sup_{t \in [0,1]} \|\dot\zeta_m(t) - \nu(t)\| \\
&\to 0, \quad \text{along } m \in M_2.
\end{align}
And since $\S$ is closed, we have $\zeta(t) = \lim_{m \in M_2} \zeta_m(t) \in \S$ for all $t$, so that $\gamma \subset \S$.  Hence, $(\gamma_m : m \in M_2)$ converges uniformly to $\gamma$.
We also have 
\begin{align}
a_m = \Lambda(\gamma_m) = \Lambda(\zeta_m) 
&= \int_0^1 \|\dot\zeta_m(t)\| {\rm d}t \\
&\to \int_0^1 \|\nu(t)\| {\rm d}t = \int_0^1 \|\dot\zeta(t)\| {\rm d}t = \Lambda(\zeta) = \Lambda(\gamma) =: a_\infty,
\end{align}
along $m \in M_2$.  
If $a_\infty = 0$, then $\gamma = \{x\}$ and has curvature 0 (by convention).  If $a_\infty > 0$, then $s \to \zeta(s/a_\infty)$ is a unit-speed parameterization of $\gamma$ (which we also denote by $\gamma$), because, for all $t$, $a_m = \|\dot\zeta_m(t)\| \to \|\nu(t)\|$ and $a_m \to a_\infty$, along $m \in M_2$.
We have that $\dot\gamma$ is $\kappa$-Lipschitz, because $\dot\gamma_m$ is $\kappa$-Lipschitz for all $m$ and $\dot\gamma_m \to \dot\gamma$ pointwise along $m \in M_2$, and thus $\gamma$ has curvature at most $\kappa$.
\end{proof}

\begin{lem} \label{lem:delta_kappa}
Take $x,x' \in \S \subset \bbR^D$.  
If $\S$ is closed and $\delta_{\S,\kappa}(x,x') < \infty$, the infimum in \eqref{delta_kappa} is attained.
\end{lem}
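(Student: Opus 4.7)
The plan is to extract a limit from a length-minimizing sequence using \lemref{curv-lim}. Set $a := \delta_{\S,\kappa}(x,x') < \infty$. By \eqref{delta_kappa}, for each integer $m \ge 1$ there is a unit-speed curve $\gamma_m : [0, a_m] \to \S$ with curvature bounded by $\kappa$, $\gamma_m(0) = x$, $\gamma_m(a_m) = x'$, and $a \le a_m \le a + 1/m$.

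The one subtlety is that \lemref{curv-lim} is stated for compact ambient sets, whereas here $\S$ is only assumed to be closed. We handle this by localizing: since $\gamma_m$ is 1-Lipschitz starting at $x$ with length $a_m \le a+1$, its image lies in $\bar B(x, a+1)$, so all curves in the sequence lie in the set $\S_0 := \S \cap \bar B(x, a+1)$, which is compact because $\S$ is closed. Applying \lemref{curv-lim} to $(\gamma_m)$ inside $\S_0$ yields a subsequence $(\gamma_{m_k})$ and a curve $\gamma \subset \S_0 \subset \S$ with curvature bounded by $\kappa$, such that $\gamma_{m_k} \to \gamma$ uniformly and $\Lambda(\gamma_{m_k}) \to \Lambda(\gamma)$. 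Combined with $a_{m_k} = \Lambda(\gamma_{m_k}) \to a$, this gives $\Lambda(\gamma) = a$.

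It remains to check that $\gamma$ joins $x$ and $x'$. Reparametrize each $\gamma_{m_k}$ on $[0,1]$ via $\zeta_{m_k}(t) := \gamma_{m_k}(a_{m_k} t)$, so that $\zeta_{m_k}(0) = x$ and $\zeta_{m_k}(1) = x'$ for every $k$. Uniform convergence of the reparametrized curves to a common limit $\zeta$ on $[0,1]$ then forces $\zeta(0) = x$ and $\zeta(1) = x'$, so $\gamma$ joins $x$ and $x'$. Thus $\gamma$ is a unit-speed curve in $\S$ of length $a$ joining $x$ and $x'$ and having curvature bounded by $\kappa$, which realizes the infimum in \eqref{delta_kappa}. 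The main obstacle, namely avoiding a non-compact ambient set when invoking \lemref{curv-lim}, is disposed of by simply intersecting with a ball large enough to contain every curve in the minimizing sequence.
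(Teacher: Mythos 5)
Your proof is correct and follows essentially the same route as the paper's: extract a minimizing sequence and apply \lemref{curv-lim} to obtain a uniformly convergent subsequence whose lengths also converge, then identify the endpoints of the limit via the $[0,1]$ reparameterizations. Your extra step of intersecting $\S$ with $\bar B(x, a+1)$ to reconcile the lemma's closedness hypothesis with the compactness assumed in \lemref{curv-lim} is a legitimate refinement that the paper's own proof silently omits.
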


\begin{proof}
Take $x,x' \in \S$ such that $a := \delta_{\S, \kappa}(x,x') < \infty$.
By definition, there is a sequence $(a_m)$ converging to $a$  such that, for each $m$, there is a curve $\gamma_m \subset \S$ of length $a_m$ and of curvature at most $\kappa$ joining $x$ and $x'$.  We then apply \lemref{curv-lim} to get a subsequence $(\gamma_{m_k})$ and a curve $\gamma \subset \S$ with curvature bounded by $\kappa$ such that $\gamma_{m_k} \to \gamma$ uniformly, implying that $\gamma$ joins $x$ and $x'$, as well as $\Lambda(\gamma) = \lim_k \Lambda(\gamma_{m_k}) = a$. 
\end{proof}

\begin{lem} \label{lem:delta_kappa2}
Suppose $\S \subset \bbR^D$ is open and connected.  Then for any $x,x' \in \S$, $\lim_{\kappa \to \infty} \delta_{\S,\kappa}(x,x') = \delta_\S(x,x') < \infty$.  In particular, for any $x,x' \in \S$, there is $\kappa > 0$ such that $\delta_{\S,\kappa}(x,x') < \infty$.
\end{lem}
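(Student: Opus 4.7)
Since $\S$ is open and connected, it is path-connected, so \lemref{delta_S} gives $\delta_\S(x,x') < \infty$ for every $x,x'\in \S$. Every admissible curve for $\delta_{\S,\kappa}$ is a unit-speed (hence 1-Lipschitz) curve in $\S$ joining $x$ and $x'$, so $\delta_{\S,\kappa}(x,x') \ge \delta_\S(x,x')$ for all $\kappa > 0$. Hence it suffices to show $\limsup_{\kappa\to\infty} \delta_{\S,\kappa}(x,x') \le \delta_\S(x,x')$. The plan is, for each $\epsilon > 0$, to exhibit a curve $\tilde P \subset \S$ from $x$ to $x'$ with $\Lambda(\tilde P) \le \delta_\S(x,x') + \epsilon$ and curvature bounded by some finite $1/\rho$; then $\delta_{\S,\kappa}(x,x') \le \delta_\S(x,x') + \epsilon$ for every $\kappa \ge 1/\rho$, and sending $\epsilon \to 0$ completes the proof.

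Fix $\epsilon > 0$ and choose a 1-Lipschitz $\gamma_0 : [0,a] \to \S$ with $\gamma_0(0) = x$, $\gamma_0(a) = x'$ and $a \le \delta_\S(x,x') + \epsilon/2$. The image $\gamma_0([0,a])$ is compact and contained in the open set $\S$, so by \lemref{compact-in-open} there is $r > 0$ with $\gamma_0([0,a]) \oplus B(0,r) \subset \S$. Choosing $n$ with $a/n < r$ and setting $p_j = \gamma_0(ja/n)$, each segment $[p_j,p_{j+1}]$ has length at most $a/n < r$ and both endpoints on $\gamma_0$, so it lies in the $r$-tube around $\gamma_0$, hence in $\S$; the polygonal path $P$ joining these vertices satisfies $\Lambda(P) \le \sum_j (t_{j+1}-t_j) = a$. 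Iteratively desingularize $P$ by removing any interior vertex $p_j$ at which the turning angle equals $\pi$: the three points $p_{j-1},p_j,p_{j+1}$ are then collinear, and writing $p_j = p_{j-1} + \ell_1 u$ and $p_{j+1} = p_j - \ell_2 u$ one checks that $[p_{j-1},p_{j+1}]$ is a sub-segment of $[p_{j-1},p_j]$ or of $[p_j,p_{j+1}]$ according to whether $\ell_1 \ge \ell_2$ or $\ell_1 < \ell_2$, so $[p_{j-1},p_{j+1}] \subset [p_{j-1},p_j]\cup[p_j,p_{j+1}] \subset \S$ and the substitution strictly shortens the path while removing a vertex. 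After finitely many such substitutions one obtains a polygonal path $P' \subset \S$ from $x$ to $x'$ with $\Lambda(P') \le a$ and every interior turning angle $\alpha_j$ in $[0,\pi)$.

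Finally, round each interior vertex $q_j$ of $P'$ by the circular arc of common radius $\rho$ tangent to the two adjacent segments; this replaces two sub-segments of total length $2\rho\tan(\alpha_j/2)$ by an arc of length $\rho\alpha_j$, and the elementary inequality $\alpha \le 2\tan(\alpha/2)$ on $[0,\pi)$ guarantees that the total length does not increase. Each arc lies in $\bar B(q_j,\rho\tan(\alpha_j/2))$, so taking $\rho$ small enough keeps every arc inside $\S$ (finitely many $q_j$, each in the open set $\S$) and enforces the finite non-collision constraints $\rho(\tan(\alpha_j/2) + \tan(\alpha_{j+1}/2)) < \|q_{j+1} - q_j\|$, which are feasible because every $\alpha_j < \pi$. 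The resulting arc-length-parameterized curve $\tilde P \subset \S$ joining $x$ and $x'$ has derivative that is constant on straight portions and rotates at angular rate $1/\rho$ on arc portions and matches tangentially at every junction, so the derivative is globally $(1/\rho)$-Lipschitz; hence $\tilde P$ has curvature bounded by $1/\rho$, and $\Lambda(\tilde P) \le \Lambda(P') \le a \le \delta_\S(x,x') + \epsilon/2$, as required. The main technical obstacle is precisely this rounding step: since the turning angles $\alpha_j$ can a priori be arbitrarily close to $\pi$ (or exactly $\pi$, which is why the desingularization must come first), one has to check simultaneously that the rounded curve stays in $\S$, that adjacent arcs do not collide, and that the tangent junctions yield a derivative with the claimed global Lipschitz constant $1/\rho$.
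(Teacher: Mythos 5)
Your proof is correct and follows essentially the same route as the paper's: take a near-optimal $1$-Lipschitz path, enclose it in a tube inside the open set $\S$ via \lemref{compact-in-open}, replace it by an inscribed polygonal line of no greater length, and smooth the corners to obtain a finite curvature bound. The only differences are cosmetic or in rigor's favor --- you round with tangent circular arcs (checking $\alpha_j \le 2\tan(\alpha_j/2)$ so the length does not increase) where the paper uses parabolic caps, and you explicitly dispose of the degenerate turning angle $\pi$ before smoothing, a case the paper's $t \mapsto \alpha|t|$ normalization silently excludes.
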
 

\begin{proof}
Fix $x,x' \in \S$.  By \lemref{delta_S}, $a := \delta_\S(x,x') < \infty$.  
For $m \ge 1$ integer, let $a_m = a + 1/m$.  By definition, there is a 1-Lipschitz function $\gamma_m : [0,a_m] \to \S$ such that $\gamma(0) = x$ and $\gamma_m(a_m) = x'$.  
Since $\gamma_m$ (as a curve) is compact and entirely within the open $\S$, by \lemref{compact-in-open} there is $h_m > 0$ such that $\gamma_m \oplus B(0,h_m) \subset \S$.  Consider the polygonal line, denoted $p_m$, assumed to be parameterized by arc-length, joining $\gamma_m(j a_m/n_m)$ for $j = 0, \dots, n_m$ where$n_m = \lceil 2 a_m/h_m \rceil$.
Note that $\Lambda(p_m) \le \Lambda(\gamma_m)$, and also
\begin{align*}
\H(p_m, \gamma_m) 
&\le \sup_t \|p_m(t) - \gamma_m(t)\| \\
&\le \max\Big( \sup_t \min_j \|p_m(t) - \gamma_m(j a_m/n_m)\|, \ \sup_t \min_j \|\gamma_m(t) - p_m(j a_m/n_m)\|\Big) \\
&= \max\Big( \sup_t \min_j \|p_m(t) - p_m(j a_m/n_m)\|, \ \sup_t \min_j \|\gamma_m(t) - \gamma_m(j a_m/n_m)\|\Big) \\
&\le a_m/n_m \le h_m/2.
\end{align*}
We used \eqref{hausdorff} in the first line, the fact that $p_m(j a_m/n_m) = \gamma_m(j a_m/n_m)$ for all $j$ in the third line, and the fact that $p_m$ and $\gamma_m$ are 1-Lipschitz in the fourth line.
We may thus conclude that $p_m \subset \gamma_m \oplus B(0,h_m/2)$. 

We now smooth $p_m$ at the vertices to obtain a function with bounded second derivative almost everywhere.  Consider two consecutive segments of $p_m$.  Zoom in on a ball centered at their intersection point (vertex) and of small enough radius.  After an appropriate change of coordinates and a projection onto the plane defined by the two line segments, in that neighborhood $p_m$ can be made to correspond to the graph of the function $t \to \alpha |t|$ on $(-c, c)$.  This is illustrated in \figref{smooth-polygon}.
We assume the radius of the ball is small enough that $c \le h_m/\alpha$.  
In these coordinates, consider the graph of the function $t \to \frac\alpha{2c} t^2 + \frac{\alpha c}2$, and replace the V-shape piece of $p_m$ in that neighborhood with that piece of parabola.  Do that for all pairs of consecutive line segments of $p_m$ and name the resulting curve $\zeta_m$.  By construction, $\zeta_m$ has bounded curvature, $\Lambda(\zeta_m) \le \Lambda(p_m)$, and $\zeta_m \subset p_m \oplus B(0,h_m/2)$ since $\frac{\alpha c}2 \le h_m/2$, and $\zeta_m$ joins $x$ and $x'$.
In addition, 
\bitem
\item $\Lambda(\zeta_m) \le \Lambda(\gamma_m)$ since $\Lambda(p_m) \le \Lambda(\gamma_m)$;
\item $\zeta_m \subset \gamma_m \oplus B(0,h_m) \subset \S$, since $p_m \oplus B(0,h_m/2) \subset \gamma_m \oplus B(0,h_m)$ by an application of the triangle inequality. \qedhere
\eitem
\end{proof}

\begin{figure}[ht]
\begin{center}
\centerline
{\includegraphics[width=3in]{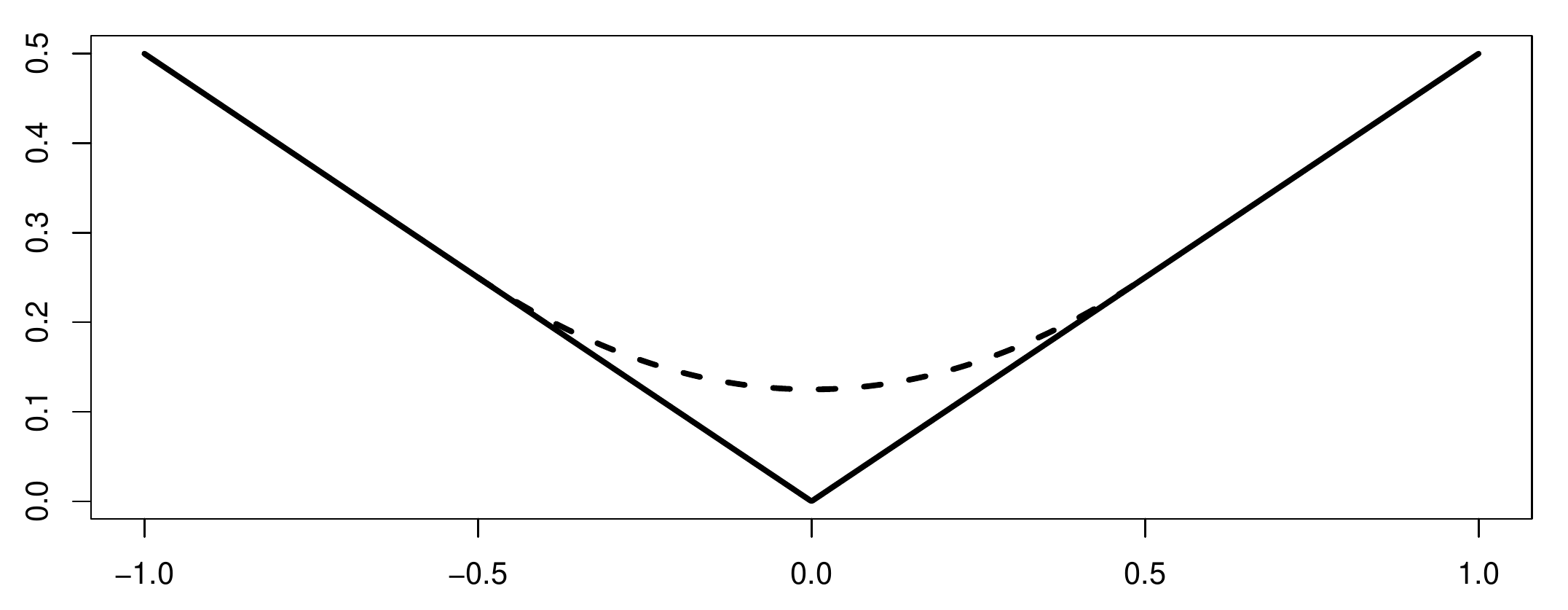}}
\caption{\small Two consecutive line segments of a possibly longer polygonal line $p_m$ and a quadratic approximation (dashed) $\zeta_m$, resulting in shorter length and bounded curvature.}
\label{fig:smooth-polygon}
\end{center}
\vskip -0.2in
\end{figure}

The following results concern a smooth submanifold $\S$, where the shortest paths are known to have uniformly bounded curvature.

\begin{lem} \label{lem:finite0}
Assume that $\S \subset \bbR^D$ is a compact and connected $C^2$ submanifold without boundary.  Let $\kappa_\S$ denote the maximum (unsigned) principal curvature at any point on $\S$.  Then the shortest paths on $\S$ have curvature at most $\kappa_\S$.
\end{lem}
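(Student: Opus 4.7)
My plan is to prove this via the geodesic equation for submanifolds of Euclidean space, which relates the second derivative of a shortest path to the second fundamental form of $\S$.

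First, since $\S$ is a compact, connected, boundaryless $C^2$ submanifold, it inherits a $C^1$ Riemannian metric from $\bbR^D$, and by the Hopf--Rinow theorem any two points of $\S$ are joined by a length-minimizing curve. Standard Riemannian geometry (existence and uniqueness for the geodesic ODE in local coordinates, combined with the first variation of arc length) shows that any such shortest path, when parametrized by arc length, is of class $C^2$ and satisfies the intrinsic geodesic equation $\nabla_{\dot\gamma}\dot\gamma = 0$, where $\nabla$ is the Levi-Civita connection of $\S$. Translated back to the ambient space, this equation says precisely that $\ddot\gamma(t)$ lies in the normal space $T_{\gamma(t)}\S^\perp$ for every $t$; equivalently, $\ddot\gamma(t) = \II_{\gamma(t)}(\dot\gamma(t),\dot\gamma(t))$, where $\II_p : T_p\S \times T_p\S \to T_p\S^\perp$ denotes the second fundamental form of $\S$ at $p$.

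Next I would bound $\|\ddot\gamma(t)\|$ in terms of principal curvatures. By definition, the principal curvatures of $\S$ at a point $p$ in the direction of a unit normal $n \in T_p\S^\perp$ are the eigenvalues of the symmetric bilinear form $v,w \mapsto \langle \II_p(v,w), n\rangle$ on $T_p\S$, so for any unit $v \in T_p\S$ and unit $n \in T_p\S^\perp$ we have $|\langle \II_p(v,v), n\rangle| \le \kappa_\S$. Taking the supremum over unit normals $n$ gives $\|\II_p(v,v)\| \le \kappa_\S$ for every unit tangent vector $v$. Applied along the shortest path, which is unit-speed so $\|\dot\gamma(t)\| = 1$ for all $t$, this yields
\[
\|\ddot\gamma(t)\| \;=\; \|\II_{\gamma(t)}(\dot\gamma(t),\dot\gamma(t))\| \;\le\; \kappa_\S
\qquad \text{for all } t.
\]

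Finally I connect this to the paper's notion of bounded curvature: since $\gamma$ is unit-speed, $\langle \dot\gamma(t), \ddot\gamma(t)\rangle = 0$, so $\|\dot\gamma \wedge \ddot\gamma\| = \|\ddot\gamma\|$ and the pointwise curvature $\curv(\gamma,t)$ from \eqref{curv-gamma} equals $\|\ddot\gamma(t)\|$. The bound above therefore gives $\sup_t \curv(\gamma,t) \le \kappa_\S$, and by the equivalence recorded after \eqref{curv-gamma} this means $\dot\gamma$ is $\kappa_\S$-Lipschitz, i.e.\ $\gamma$ has curvature at most $\kappa_\S$ in the sense used throughout the paper. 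The main obstacle is really only the invocation of Riemannian machinery to get the $C^2$ regularity of shortest paths and the identification $\ddot\gamma = \II(\dot\gamma,\dot\gamma)$; once this is in place, the principal-curvature bound on $\II$ finishes the argument.
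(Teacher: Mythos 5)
Your proof is correct and follows essentially the same route as the paper's: invoke Hopf--Rinow to reduce to geodesics, use the Gauss formula to identify $\ddot\gamma$ with $\II(\dot\gamma,\dot\gamma)$, and bound that by the maximum principal curvature. You merely spell out in more detail the steps the paper delegates to citations of Lee's textbook.
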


\begin{proof}
The result is due to the Hopf-Rinow theorem \citep[Th 6.13]{lee2006riemannian}, which in the present case implies that the shortest paths on $\S$ are geodesics\footnote{ For a definition of geodesics, see Chapter 4 in \citep{lee2006riemannian}}.  
This is combined with the fact that a geodesic $\gamma$ on $\S$ (assumed parameterized by arc-length) has curvature at $x = \gamma(t)$ equal to the second fundamental form of $\S$ at $x$ applied to $(\dot\gamma(t), \dot\gamma(t))$ \citep[Lem 8.5]{lee2006riemannian}.
(Note that $\kappa_\S$ is finite by the fact that $\S$ is compact and $C^2$.)
\end{proof}

The last result generalizes to submanifolds with boundary, although the situation is more complicated in general.\footnote{ For example, the points where a shortest path switches from the (relative) interior and the boundary can have a closure of positive measure; this is true even in the case of a domain \citep{albrecht1991geodesics}.} 
\begin{lem} \label{lem:finite}
Assume that $\S \subset \bbR^D$ is a compact and connected $C^2$ submanifold with boundary $\partial S$ that is also a $C^2$ submanifold.  Define $\kappa_\S$ and $\kappa_{\partial \S}$ as in \lemref{finite0}.
Then the shortest paths on $\S$ have curvature at most $\max(\kappa_\S, \kappa_{\partial \S})$.
\end{lem}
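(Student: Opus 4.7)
The plan is to reduce to \lemref{finite0} applied separately to the interior of $\S$ and to $\partial \S$, then glue the pieces together across transitions. Let $\gamma: [0,a] \to \S$ be a unit-speed shortest path from $x$ to $x'$ on $\S$, which exists by the compactness argument behind \lemref{delta_kappa} (cf.\ \lemref{delta_S}). Decompose the parameter interval as $[0,a] = I \cup B$, where $I := \gamma^{-1}(\operatorname{int} \S)$ is open and $B := \gamma^{-1}(\partial \S)$ is closed. Set $\kappa := \max(\kappa_\S, \kappa_{\partial \S})$; the goal is to verify that $\gamma$ is $C^1$ and that $\dot\gamma$ is $\kappa$-Lipschitz.

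First I would handle each stratum. On any connected component of $I$, the sub-path property of shortest paths makes $\gamma$ a shortest path in the open $C^2$ manifold $\operatorname{int} \S$; by the geodesic equation and the identification of $\|\ddot\gamma\|$ with the second fundamental form used in \lemref{finite0}, it is $C^2$ with $\|\ddot\gamma(t)\| \le \kappa_\S$ there. Symmetrically, if $t_0 \in B$ has an open neighborhood in $B$, then any shorter path on $\partial \S$ joining $\gamma(t_0-\eta)$ and $\gamma(t_0+\eta)$ would also lie in $\S$ and thus contradict minimality of $\gamma$; hence $\gamma$ restricted to that neighborhood is a shortest path on the closed $C^2$ submanifold $\partial \S$ and, by \lemref{finite0} applied to $\partial \S$, is $C^2$ there with $\|\ddot\gamma(t)\| \le \kappa_{\partial \S}$.

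Next I would establish $C^1$ regularity at every transition point $t_0 \in B$, i.e.\ $\gamma(t_0) \in \partial \S$ while $\gamma$ meets $\operatorname{int} \S$ on every punctured neighborhood. A first-variation argument shows both one-sided unit tangents at $t_0$ must lie in $T_{\gamma(t_0)}(\partial \S)$: if $\dot\gamma(t_0^\pm)$ had a component pointing strictly into $\operatorname{int}\S$ (or outward across $\partial\S$), then in a local $C^2$ chart one could replace a short V-shape of $\gamma$ near $t_0$ by a nearby chord pushed slightly inward, yielding a strictly shorter admissible competitor (an exact analog of the smoothing construction in \lemref{delta_kappa2} and \figref{smooth-polygon}). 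The same reasoning forces $\dot\gamma(t_0^-) = \dot\gamma(t_0^+)$, so $\gamma$ is differentiable at $t_0$. Combined with the $C^2$ pieces of the previous step, this makes $\dot\gamma$ continuous on $[0,a]$.

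Finally I would combine these ingredients to get the Lipschitz bound. By construction, $\gamma$ is $C^1$, is $C^2$ on an open set whose complement in $[0,a]$ has empty interior away from open subintervals of $B$, and satisfies $\|\ddot\gamma(t)\| \le \kappa$ wherever defined. Since $\dot\gamma$ is continuous on $[0,a]$ and differentiable a.e.\ with derivative bounded in norm by $\kappa$, it is absolutely continuous and thus $\kappa$-Lipschitz, which is exactly the definition of curvature bounded by $\kappa$ used in the paper. The main obstacle, as flagged by the footnote to \lemref{finite}, is the topological complexity of the transition set $B \setminus \operatorname{int}(B)$: a shortest path can touch $\partial \S$ on a Cantor-like set, so the verification of $C^1$ regularity and of the pointwise acceleration bound must be local and uniform across the strata rather than a finite case analysis; the variational argument above is designed precisely to be insensitive to how the transitions accumulate.
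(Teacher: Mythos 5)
Your stratified decomposition and the first-variation argument for $C^1$ regularity at transition points are in the right spirit (the paper itself gets both of these, plus more, by citing Alexander--Berg--Bishop), but the final gluing step has a genuine gap. The set where you have established $\|\ddot\gamma\|\le\kappa$ is the open set $I\cup\operatorname{int}(B)$; its complement $B\setminus\operatorname{int}(B)$ is closed with empty interior, but it need \emph{not} have Lebesgue measure zero --- the footnote attached to the lemma warns exactly of this: the switch points can have closure of positive measure. So your claim that $\dot\gamma$ is ``differentiable a.e.\ with derivative bounded in norm by $\kappa$'' is not justified. Worse, even if it were, the inference ``continuous $+$ differentiable a.e.\ with bounded derivative $\Rightarrow$ absolutely continuous $\Rightarrow$ Lipschitz'' is false: the Cantor--Lebesgue function is continuous, is $C^\infty$ with vanishing derivative on a dense open set of full measure, and is not Lipschitz. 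Nothing in your argument rules out $\dot\gamma$ acquiring a singular (Cantor-function-like) component supported on the fat transition set, which would destroy the curvature bound in the paper's sense ($\dot\gamma$ being $\kappa$-Lipschitz).

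The missing ingredient is precisely the extra fact the paper imports from \citet{alexander1987riemannian}: not only is $\gamma$ twice differentiable away from switch points with the stated bound, but $\gamma$ has \emph{zero} curvature at every accumulation point of switch points; combined with $C^1$ regularity from \citet{alexander1981geodesics}, this controls the second difference quotients of $\gamma$ across the whole (possibly positive-measure) bad set and yields the Lipschitz bound on $\dot\gamma$ everywhere. Your closing paragraph correctly identifies the Cantor-like structure as the obstacle, but the variational argument you propose only delivers $C^1$ regularity (matching of one-sided tangents) at each transition point; it gives no quantitative bound on $\|\dot\gamma(t)-\dot\gamma(s)\|/|t-s|$ when $s$ and $t$ both lie in, or straddle, the transition set. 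To repair the proof you would need either to show the transition set is null (false in general) or to prove a quantitative version of the first-variation estimate at accumulation points of switches --- which is essentially the content of the cited results.
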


\begin{proof}
Consider a shortest path $\gamma$ between $x,x' \in \S$, assumed to be unit-speed.  It must be the concatenation of shortest paths $\S \setminus \partial \S$ and shortest paths in $\partial \S$.
We learn from \citep[Sec 2]{alexander1987riemannian} that $\gamma$ is twice differentiable except at switch points (where $\gamma$ switches between $\S$ and $\partial\S$), and in particular it has curvature bounded by $\max(\kappa_\S, \kappa_{\partial \S})$ except at switch points.  We also learn that $\gamma$ must have zero curvature at any accumulation point of switch points.  Thus the only concern might be the isolated switch points.  However, we learn from \citep{alexander1981geodesics} that $\gamma$ must be at least $C^1$.  Therefore, overall, it must have curvature at most $\max(\kappa_\S, \kappa_{\partial \S})$ everywhere.
\end{proof}

The fact that $\S$ is a submanifold is, in fact, not necessary for points to be joined by curvature constrained paths, as the following extension establishes.  In particular, self-intersections are possible.

\begin{lem} \label{lem:finite-extend}
Assume that $\cU \subset \bbR^d$ is compact and such that $\delta_{\cU, \kappa_0}(u, u') < \infty$ for all  $u, u' \in \cU$ for some $\kappa_0 \ge 0$.  Let $\psi : \cU \to \S \subset \bbR^D$ be twice differentiable, surjective, and such that $D_u \psi$ is nonsingular for all $u \in \cU$.  Then there is $\kappa$ depending only on $\psi$ and $\kappa_0$ such that $\delta_{\S, \kappa}(x,x') < \infty$ for all $x,x' \in \S$.\end{lem}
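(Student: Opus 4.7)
The plan is to lift a curvature-constrained path from $\cU$ through $\psi$ and bound the curvature of the image curve by a constant depending only on $\psi$ and $\kappa_0$, using the parameterization-invariant formula \eqref{curv-gamma} so that the bound survives reparameterization by arc length.

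Fix $x, x' \in \S$ and, by surjectivity of $\psi$, choose $u \in \psi^{-1}(x)$ and $u' \in \psi^{-1}(x')$. By the hypothesis on $\cU$, $\delta_{\cU, \kappa_0}(u, u') < \infty$, so there is a unit-speed curve $\alpha : [0, a] \to \cU$ joining $u$ to $u'$ whose derivative is $\kappa_0$-Lipschitz; in particular $\alpha$ is twice differentiable almost everywhere with $\|\ddot\alpha\| \le \kappa_0$. Set $\gamma := \psi \circ \alpha$, a curve on $\S$ joining $x$ and $x'$. Because $\cU$ is compact, $u \mapsto D_u \psi$ is continuous and pointwise injective, so there exist $0 < c \le C < \infty$ with
\[
c \|v\| \le \|D_u \psi \cdot v\| \le C \|v\|, \quad \forall u \in \cU,\ v \in \bbR^d.
\]
Interpreting ``twice differentiable'' with enough regularity that $D^2 \psi$ is continuous (e.g., $C^2$), compactness also yields $C' < \infty$ with $\|D^2_u \psi\| \le C'$ for all $u \in \cU$.

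By the chain rule,
\[
\dot\gamma(t) = D_{\alpha(t)}\psi \cdot \dot\alpha(t), \qquad \ddot\gamma(t) = D^2_{\alpha(t)}\psi\bigl(\dot\alpha(t), \dot\alpha(t)\bigr) + D_{\alpha(t)}\psi \cdot \ddot\alpha(t),
\]
so, using $\|\dot\alpha\| = 1$ and $\|\ddot\alpha\| \le \kappa_0$, one gets $\|\dot\gamma(t)\| \ge c$ and $\|\ddot\gamma(t)\| \le C' + C \kappa_0$ almost everywhere. Combining this with $\|\dot\gamma \wedge \ddot\gamma\| \le \|\dot\gamma\|\,\|\ddot\gamma\|$ and \eqref{curv-gamma} gives
\[
\curv(\gamma, t) \le \frac{\|\ddot\gamma(t)\|}{\|\dot\gamma(t)\|^2} \le \frac{C' + C \kappa_0}{c^2} =: \kappa,
\]
a bound that depends only on $\psi$ and $\kappa_0$. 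Since $\|\dot\gamma\| \ge c > 0$, the arc-length map $s(t) = \int_0^t \|\dot\gamma\|$ is bi-Lipschitz, so $\gamma$ admits a unit-speed reparameterization $\tilde\gamma$; by parameterization-invariance of \eqref{curv-gamma}, $\tilde\gamma$ has curvature bounded by $\kappa$ in the sense of the paper. Therefore $\delta_{\S, \kappa}(x, x') \le \Lambda(\tilde\gamma) < \infty$ for every $x, x' \in \S$, as required.

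The main subtlety, more bookkeeping than obstacle, is that the lifted path $\gamma$ is not unit-speed and picks up two distinct contributions to its acceleration: an ``extrinsic'' term $D^2\psi(\dot\alpha, \dot\alpha)$ from the bending of $\psi$, and an ``intrinsic'' term $D\psi \cdot \ddot\alpha$ from the curvature of $\alpha$. Using the invariant curvature formula sidesteps any need to reparameterize before estimating, and the nonsingularity of $D\psi$ on the compact set $\cU$ is precisely what keeps $\|\dot\gamma\|$ bounded below so that the denominator in \eqref{curv-gamma} does not blow up.
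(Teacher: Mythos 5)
Your proposal is correct and follows essentially the same route as the paper: lift a $\kappa_0$-curvature-bounded path through $\psi$, use the chain rule together with compactness and nonsingularity of $D\psi$ to bound $\|\ddot\gamma\|$ from above and $\|\dot\gamma\|$ from below, and conclude via the parameterization-invariant formula \eqref{curv-gamma}. Your explicit handling of the arc-length reparameterization and the $\|\dot\gamma\|^2 \ge c^2$ denominator is in fact slightly more careful than the paper's (which writes $a_1$ rather than $a_1^2$ in the final bound), but the argument is the same.
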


\begin{proof}
Take $x, x' \in \S$ and let $u,u' \in \cU$ be such that $\psi(u) = x$ and $\psi(u') = x'$.  Let $\alpha : [0,a] \to \cU$ be a unit-speed path on $\cU$ with curvature at most $\kappa_0$ joining $u$ and $u'$.  The existence of such an $\alpha$ relies on the fact that $\delta_{\cU, \kappa_0}(u, u') < \infty$.  Define $\zeta(t) = \psi(\alpha(t))$, so that $\zeta: [0,a] \to \S$ joins $x$ and $x'$.  Note that $\zeta$ is $C^2$ with 
\begin{align*}
\dot\zeta(t) &= D_{\alpha(t)} \psi (\dot\alpha(t)), \\ 
 \ddot\zeta(t) &= D^2_{\alpha(t)} \psi(\dot\alpha(t),\dot\alpha(t)) + D_{\alpha(t)} \psi (\ddot\alpha(t)),
\end{align*}
where $D_u^2 \psi$ denotes the differential of $\psi$ of order 2 at $u$.
By the fact that $D_u \psi$ is nonsingular and continuous in $u$, its smallest singular value, minimized over $u \in \cU$ is strictly positive.  If we denote this by $a_1$, we have that $a_1 >0$ and that $\|D_u \psi (v)\| \ge a_1 \|v\|$ for all $u$ and $v$.  
Similarly, there are reals $b_1$ and $b_2$ such that $\|D_u \psi (v)\| \le b_1 \|v\|$ and $\|D^2_u \psi (v,v)\| \le b_2 \|v\|^2$ for all $u$ and $v$.  Therefore,
\begin{align*}
\|\dot\zeta(t)\| &\ge a_1 \|\dot\alpha(t)\| = a_1, \\
\|\ddot\zeta(t)\| &\le \|D^2_{\alpha(t)} \psi\| \|\dot\alpha(t)\|^2 + \|D_{\alpha(t)} \psi\| \|\ddot\alpha(t)\| \le b_2 + b_1 \kappa_0.
\end{align*}
Thus, we have
\[
\curv(\zeta,t) = \frac{\|\dot\zeta(t) \wedge \ddot\zeta(t)\|}{\|\dot\zeta(t)\|^3} \le \frac{\|\ddot\zeta(t)\|}{\|\dot\zeta(t)\|^2} 
\le \frac{b_2 + b_1 \kappa_0}{a_1}.
\qedhere
\]
\end{proof}

\subsection{A notion of curvature for polygonal lines}
\label{sec:curvature}
A polygonal line has infinite curvature at any vertex (excluding the endpoints if the line is open).  Nevertheless, it is possible to define a different notion of curvature specifically designed for polygonal lines that will prove useful later on when we approximate smooth curves with polygonal lines.

For an ordered triplet of points $(x, y, z)$ in $\bbR^D$, define its angle, denoted $\angle(x,y,z)$, as the following angle $\angle(x-y, z-y) \in [0, \pi]$, and define its curvature as
\beq \label{curv-def}
\curv(x,y,z) = \begin{cases}
1/R(x,y,z), & \text{if } \angle(x,y,z) \ge \frac\pi2, \\
\infty, & \text{otherwise},
\end{cases}
\eeq
where $R(x,y,z)$ denote the radius of the circle passing through $x,y,z$  ---  with $R(x,y,z) = \infty$ if $x,y,z$ are aligned.  This is illustrated in \figref{curvature}.

\begin{figure}[ht]
\begin{center}
\centerline
{\includegraphics[width=3in]{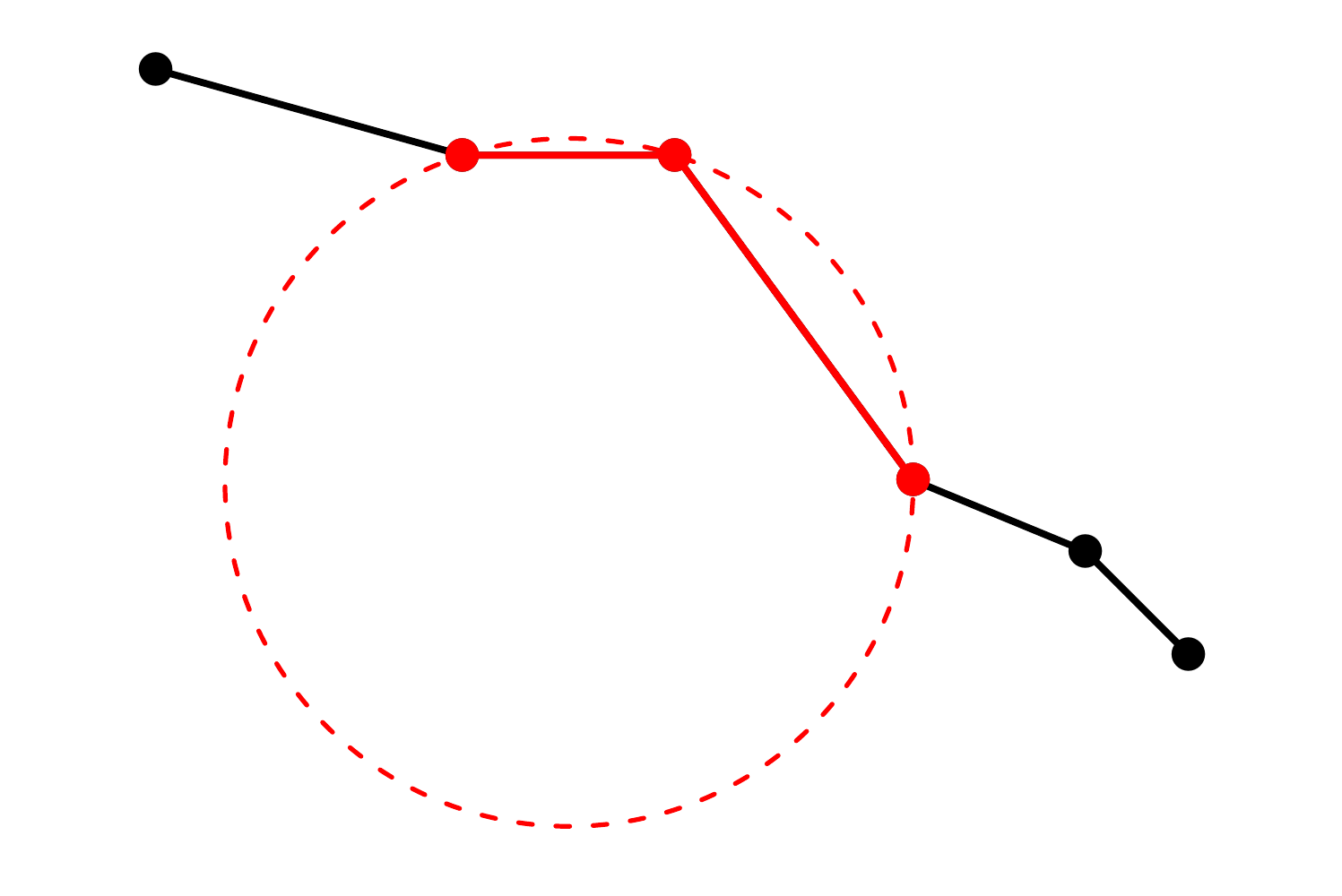}}
\caption{\small The curvature at the middle vertex is defined as the inverse of the radius of the circle (circumradius) passing through this point and the two adjacent vertices on the polygonal line (highlighted in red).}
\label{fig:curvature}
\end{center}
\vskip -0.2in
\end{figure}

Using a well-known expression for the circumradius, we obtain the following.

\begin{lem} \label{lem:curv}
For any distinct points $x,y,z \in \bbR^D$ such that $\angle(x, y, z) \ge \frac\pi2$,
\beq \label{curv-ang}
\curv(x,y,z) = \frac{2 \|(x-y) \wedge (y-z)\|}{\|x-y\| \|y-z\| \|z-x\|} = \frac{2\sin\angle(x, y, z)}{\|x-z\|}.
\eeq
\end{lem}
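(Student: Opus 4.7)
The plan is to derive both equalities from classical planar geometry applied to the triangle with vertices $x$, $y$, $z$, which is legitimate because three points always lie in a 2-plane and the circumradius $R(x,y,z)$ is the usual circumradius of that triangle. The hypothesis $\angle(x,y,z) \ge \pi/2$ plays no computational role in either identity on the right-hand side of \eqref{curv-ang}; it is needed only so that the left-hand side equals $1/R(x,y,z)$ rather than $\infty$ by definition \eqref{curv-def}. So one can proceed by computing $1/R(x,y,z)$ and rewriting it in two ways.

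For the first equality, I would invoke the standard circumradius formula $R = \frac{abc}{4A}$, where $a,b,c$ are the side lengths of the triangle and $A$ is its area. Taking $a = \|x-y\|$, $b = \|y-z\|$, $c = \|z-x\|$ and expressing the area as $A = \tfrac12\|(x-y)\wedge(y-z)\|$ (which is half of the parallelogram area spanned by two edges, using the wedge-product identity $\|u\wedge v\| = \|u\|\|v\||\sin\angle(u,v)|$ recalled in the \textbf{Vectors} paragraph of the preliminaries), one obtains
\[
\frac{1}{R(x,y,z)} = \frac{4A}{abc} = \frac{2\,\|(x-y)\wedge(y-z)\|}{\|x-y\|\,\|y-z\|\,\|z-x\|},
\]
which is exactly the first claimed expression.

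For the second equality, I would reduce the wedge norm using the same identity: $\|(x-y)\wedge(y-z)\| = \|x-y\|\,\|y-z\|\,\sin\angle(x-y,y-z)$. Then I need the elementary observation that $\angle(x-y,y-z) = \pi - \angle(x-y,z-y) = \pi - \angle(x,y,z)$, so the sines agree, and the factors $\|x-y\|\,\|y-z\|$ cancel, leaving $\frac{2\sin\angle(x,y,z)}{\|x-z\|}$. Alternatively, one can skip the first identity and read the second directly off the law of sines in the triangle $xyz$: the side $xz$ is opposite the angle at $y$, so $\|x-z\|/\sin\angle(x,y,z) = 2R(x,y,z)$.

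There is no serious obstacle here; the only mild subtlety is the sign/supplementary-angle bookkeeping in $\angle(x-y,y-z)$ versus $\angle(x-y,z-y) = \angle(x,y,z)$, which is harmless because only $\sin$ enters and $\sin(\pi-\theta)=\sin\theta$. It is worth flagging, as I did above, that the two identities in \eqref{curv-ang} hold as statements about the circumradius regardless of whether $\angle(x,y,z) \ge \pi/2$; the angle hypothesis is needed only to match the left-hand side to the piecewise definition of $\curv(x,y,z)$ in \eqref{curv-def}.
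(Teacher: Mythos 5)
Your proof is correct and follows the same route the paper intends: the lemma is stated there as an immediate consequence of the classical circumradius formula $R = abc/(4A)$ together with the wedge-product expression for the area and the law of sines, which is exactly what you do. Your remark that the hypothesis $\angle(x,y,z) \ge \pi/2$ is needed only to identify $\curv(x,y,z)$ with $1/R(x,y,z)$ in the piecewise definition \eqref{curv-def}, and not for the identities themselves, is also accurate.
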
 

Other notions of discrete curvature exist in the literature, as discussed in \citep[Sec.~2.2]{MR2934178}.  We chose to work with this particular one because of the following consistency property.
Recall the definition \eqref{curv-gamma}.

\begin{lem} \label{lem:curvature}
Consider a curve $\gamma : (a,b) \to \bbR^D$ which is twice continuously differentiable.  Then, holding $s \in (a,b)$ fixed, we have\footnote{ Here, $r \nearrow s$ means that $r$ approaches $s$ from the left, and similarly, $t \searrow s$ means that $t$ approaches $s$ from the right, on the real line.}
\[
\curv(\gamma(r), \gamma(s), \gamma(t)) \to \curv(\gamma, s), \quad \text{as } r \nearrow s \text{ and } t \searrow s.
\]
\end{lem}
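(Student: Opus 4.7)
My plan is to compute both sides via the Taylor expansion of $\gamma$ at $s$ to second order, substitute into the explicit formula for the discrete curvature from \lemref{curv}, and verify that the $o$-terms are small enough to be absorbed. Throughout I will assume $\dot\gamma(s) \ne 0$, which is needed for $\curv(\gamma,s)$ to be defined by \eqref{curv-gamma}.

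Write $h = s - r > 0$ and $k = t - s > 0$, and set $v = \dot\gamma(s)$, $w = \ddot\gamma(s)$. By $C^2$-regularity,
\begin{align*}
x - y &:= \gamma(r) - \gamma(s) = -h v + \tfrac{h^2}{2} w + o(h^2), \\
y - z &:= \gamma(s) - \gamma(t) = -k v - \tfrac{k^2}{2} w + o(k^2).
\end{align*}
First I would check that the angle condition $\angle(x,y,z) \ge \pi/2$ holds for $h,k$ small, so that $\curv(x,y,z)$ is finite and \lemref{curv} applies. Expanding $\<x-y,z-y\> = -hk\|v\|^2 + O(hk(h+k))$ shows the inner product is strictly negative for $h,k$ small enough, hence $\angle(x,y,z) > \pi/2$. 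Then I would plug into the wedge-product expression in \eqref{curv-ang}.

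For the numerator, expanding $(x-y)\wedge(y-z)$ by bilinearity and using $v\wedge v = 0 = w\wedge w$ kills the $hk\,v\wedge v$ and $\tfrac{h^2k^2}{4}w\wedge w$ terms, leaving
\[
(x-y)\wedge(y-z) = \tfrac{hk(h+k)}{2}\, v \wedge w + E,
\]
where $E$ collects the terms involving the $o(h^2)$ and $o(k^2)$ remainders; each such term is of size $o(hk^2) + o(h^2 k) = o(hk(h+k))$. For the denominator, standard estimates give $\|x-y\| = h\|v\|(1+o(1))$, $\|y-z\| = k\|v\|(1+o(1))$, and, writing $z-x = (h+k)v + \tfrac{k^2-h^2}{2}w + o((h+k)^2)$, also $\|z-x\| = (h+k)\|v\|(1+o(1))$. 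Substituting,
\[
\curv(x,y,z) = \frac{2\|(x-y)\wedge(y-z)\|}{\|x-y\|\|y-z\|\|z-x\|} = \frac{hk(h+k)\|v\wedge w\|(1+o(1))}{hk(h+k)\|v\|^3(1+o(1))} \longrightarrow \frac{\|v\wedge w\|}{\|v\|^3} = \curv(\gamma,s).
\]

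The only delicate point is getting the error control in the numerator right, because the leading term is a sum of two summands of order $hk^2$ and $h^2 k$ (both suppressed by one order relative to the naive $hk$ prediction since $v\wedge v=0$), and one must check that the remainder $E$ inherits an order $o(hk(h+k))$ rather than something worse. This follows because each bilinear cross-term that contains a Taylor remainder factor carries an extra factor of $h$ or $k$ compared to the corresponding dominant term, so the $o$-estimates for the remainders propagate cleanly through the wedge. The rest is bookkeeping.
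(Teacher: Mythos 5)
Your proposal is correct and follows essentially the same route as the paper's proof: a second-order Taylor expansion at $s$, a check that the angle eventually exceeds $\pi/2$ so that the formula of \lemref{curv} applies, and the observation that the cancellation $v \wedge v = 0$ makes the wedge product of order $hk(h+k)$ with remainder $o(hk(h+k))$. The only (immaterial) difference is that the paper first reduces to arc-length parameterization, whereas you keep a general parameterization with $\dot\gamma(s)\neq 0$ and land directly on the formula \eqref{curv-gamma}.
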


Other notions of curvature do not always enjoy this consistency.  For example, those based on angle defect or Steiner's formula \citep{MR2610471} are of the form $f(\angle(x,y,z))$, for some continuous function $f$, and therefore are not consistent in general, since $\angle(\gamma(r), \gamma(s), \gamma(t)) \to \pi$ when $\gamma$ is differentiable at $s$ and $r \nearrow s$ and $t \searrow s$.

\begin{proof}
As usual, we assume that $\gamma$ has been parameterized by arc-length, and because $\gamma$ is assumed twice differentiable, we have $\curv(\gamma,s) = \| {\ddot\gamma(s)}\|$.
We expand $\gamma$ around $s$, to get
\begin{align*}
\gamma(r) - \gamma(s) &= (r-s) \dot\gamma(s) + \tfrac12 (r-s)^2 \ddot\gamma(s) + o(r-s)^2, \\
\gamma(t) - \gamma(s) &= (t-s) \dot\gamma(s) + \tfrac12 (t-s)^2 \ddot\gamma(s) + o(t-s)^2, \\
\gamma(t) - \gamma(r) &= (t-r) \dot\gamma(s) + \tfrac12 [(t-s)^2 - (r-s)^2] \ddot\gamma(s) + o(t-s)^2 + o(r-s)^2.
\end{align*}
This implies that 
\begin{align*}
\gamma(r) - \gamma(s) &= (1+o(1)) (r-s) \dot\gamma(s), \\
\gamma(t) - \gamma(s) &= (1+o(1)) (t-s) \dot\gamma(s), \\
\gamma(t) - \gamma(r) &= (1+o(1)) (t-r) \dot\gamma(s),
\end{align*}
where in the last line we used the fact that $o(t-s)^2 + o(r-s)^2 = o(t-r)^2$ since $r < s < t$.  
We thus obtain
\[
\cos \angle(\gamma(r), \gamma(s), \gamma(t)) = \frac{\<\gamma(s) - \gamma(r), \gamma(s) - \gamma(t)\>}{\|\gamma(s) - \gamma(r)\| \|\gamma(s) - \gamma(t) \|} = 1 + o(1),
\]
by the fact that $\|\dot\gamma(s)\| = 1$.  Thus, $\angle(\gamma(r), \gamma(s), \gamma(t)) \ge \pi/2$ eventually.
Assuming this is the case, applying \lemref{curv}, we have
\beq\label{curv-gamma-calculations}
\curv(\gamma(r), \gamma(s), \gamma(t))=\frac{2\left \| (\gamma(r)-\gamma(s))\wedge (\gamma(t)-\gamma(s))\right \|}{\left \| \gamma(r)-\gamma(s) \right \|\left \| \gamma(t)-\gamma(s) \right \|\left \|\gamma(t)-\gamma(r)\right \|}.
\eeq
From the same derivations, we also obtain
\beq \label{3norms}
\left \| \gamma(r)-\gamma(s) \right \|\left \| \gamma(t)-\gamma(s) \right \|\left \|\gamma(t)-\gamma(r)\right \| = (1+o(1)) (s-r) (t-s) (t-r).
\eeq
Recalling that $\|u \wedge v\| = \|u\| \|v\| \sin \angle(u,v)$ for any vectors $u,v \in \bbR^d$, we also have
\begin{align*}
(\gamma(r)-\gamma(s))\wedge (\gamma(t)-\gamma(s)) 
&= \tfrac12 (r-s)^2 (t-s) \ddot \gamma(s) \wedge \dot\gamma(s) + \tfrac12 (r-s) (t-s)^2 \dot \gamma(s) \wedge \ddot\gamma(s) \\
& \qquad + o((r-s)^2(t-s) + (r-s)(t-s)^2) \\
&= \tfrac12 (s-r) (t-s) (t-r) \dot \gamma(s) \wedge \ddot\gamma(s) + o((s-r) (t-s) (t-r)),
\end{align*}
again using the fact that $r < s < t$.  This implies that 
\beq \label{crossnorm}
\|(\gamma(r)-\gamma(s))\wedge (\gamma(t)-\gamma(s))\| = \tfrac12 (s-r) (t-s) (t-r) \|\ddot\gamma(s)\| + o((s-r) (t-s) (t-r)),
\eeq
using the fact that $\dot\gamma(s)$ and $\ddot\gamma(s)$ are orthogonal.  

We conclude the proof by plugging in \eqref{3norms} and \eqref{crossnorm} in \eqref{curv-gamma-calculations}, and simplifying.
\end{proof} 

While \lemref{curvature} is qualitative in nature, we will also need a quantitative bound.  The following result provides such a bound.  The proof is more delicate.

\begin{lem}\label{lem:shortest-arc}
Let $\gamma$ be a simple curve with curvature bounded above by $\kappa$.  Then $\curv(x,y,z)\leq \kappa$ for all $x, y, z \in\gamma$ distinct such that $y$ is between $x$ and $z$ on $\gamma$ and $\|x - z\| \le 2/\kappa$.
\end{lem}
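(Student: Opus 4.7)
The plan is to parameterize $\gamma$ by arc length so that $\gamma(t_x)=x$, $\gamma(t_y)=y$, $\gamma(t_z)=z$ with $t_x<t_y<t_z$, and set $s_1=t_y-t_x$, $s_2=t_z-t_y$, $s=s_1+s_2$. The computation will lean on \cite{dubins1957curves}'s tangent comparison $\<\dot\gamma(u),\dot\gamma(v)\>\ge \cos(\kappa(v-u))$ for $|v-u|\le\pi/\kappa$ (invoked in the excerpt just after \lemref{approx}), together with the Minimum Length Lemma (\lemref{approx}) for lower bounds on chord lengths; both require working inside the regime $s\le \pi/\kappa$. The first step is therefore to use the chord hypothesis $\|x-z\|\le 2/\kappa$, simpleness of $\gamma$, and \lemref{approx} to argue that the sub-arc of $\gamma$ from $x$ to $z$ has length at most $\pi/\kappa$: otherwise, \lemref{approx} applied to the initial sub-sub-arc of length $\pi/\kappa$ would place its far endpoint at distance $\ge 2/\kappa$ from $x$, and the tail of $\gamma$ would then have to loop back into the chord-$2/\kappa$ ball around $x$ in a way that eventually conflicts with injectivity.

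Once in the regime $s\le\pi/\kappa$, I first verify that $\angle(x,y,z)\ge\pi/2$, so that the definition \eqref{curv-def} of $\curv$ takes its nontrivial form. Applying the Dubins bound inside
\[
\<y-x,\,z-y\> = \int_{t_x}^{t_y}\!\!\int_{t_y}^{t_z}\<\dot\gamma(u),\dot\gamma(v)\>\,du\,dv
\]
and computing the resulting integral via product-to-sum identities yields
\[
\<y-x,\,z-y\> \ge \frac{4}{\kappa^2}\cos(\tfrac{\kappa s}{2})\sin(\tfrac{\kappa s_1}{2})\sin(\tfrac{\kappa s_2}{2}),
\]
which is nonnegative since $\kappa s/2\le\pi/2$. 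Hence $\<x-y,z-y\>\le 0$, so $\angle(x,y,z)\ge\pi/2$.

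For the quantitative bound, I use \lemref{curv} to write $\curv(x,y,z)=2\|(y-x)\wedge(z-y)\|/(\|x-y\|\|y-z\|\|x-z\|)$. Each chord in the denominator is bounded below by $(2/\kappa)\sin(\kappa s_\ast/2)$ via \lemref{approx}. In parallel, I bound the numerator from above by integrating the pointwise estimate $\|\dot\gamma(u)\wedge\dot\gamma(v)\|\le \sin(\kappa(v-u))$, itself a consequence of the Dubins tangent comparison; the analogous trigonometric computation produces
\[
\|(y-x)\wedge(z-y)\| \le \frac{4}{\kappa^2}\sin(\tfrac{\kappa s}{2})\sin(\tfrac{\kappa s_1}{2})\sin(\tfrac{\kappa s_2}{2}).
\]
Substituting, the three $\sin(\kappa s_\ast/2)$ factors in numerator and denominator cancel exactly and the prefactors collapse to $\kappa$. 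The delicate step is the first one, converting the chord hypothesis into an arc-length bound so that Dubins' comparison applies; the remainder is a pair of parallel Dubins-type integrations whose matching structure is precisely what forces the sharp constant~$\kappa$.
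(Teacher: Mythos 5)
Your approach is genuinely different from the paper's. The paper never integrates Dubins' tangent comparison: it shows instead (Claim~1 of its proof) that every point of the sub-arc lies in the spindle $\cV_\kappa$, the intersection of all closed balls of radius $1/\kappa$ whose boundary contains $x$ and $z$ --- via a metric-projection length comparison (\lemref{projection-ball}) combined with \lemref{approx} --- and then concludes by plane geometry: any $y$ in that spindle sees the chord at an angle at least the inscribed angle from the short arc of such a circle, which is $\ge \pi/2$, whence $\curv(x,y,z) = 2\sin\angle(x,y,z)/\|x-z\| \le \kappa$. Your double integration is an appealing alternative, and the trigonometric identities you quote do check out, but there is a genuine gap at the numerator step. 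The pointwise estimate $\|\dot\gamma(u)\wedge\dot\gamma(v)\| \le \sin(\kappa(v-u))$ does not follow from $\<\dot\gamma(u),\dot\gamma(v)\> \ge \cos(\kappa(v-u))$. The latter gives $\angle(\dot\gamma(u),\dot\gamma(v)) \le \kappa(v-u)$, and since $\|\dot\gamma(u)\wedge\dot\gamma(v)\| = \sin\angle(\dot\gamma(u),\dot\gamma(v))$, you may pass to the sine only when $\kappa(v-u) \le \pi/2$: on $(\pi/2,\pi]$ the sine is decreasing, so a smaller angle can have a strictly larger sine (compare $\angle = \pi/2$ against $\kappa(v-u) = 3\pi/4$). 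In your regime $v-u$ runs up to $s \le \pi/\kappa$, so this case genuinely occurs, and your upper bound on $\|(y-x)\wedge(z-y)\|$ is unproved except when the whole arc has length at most $\pi/(2\kappa)$.

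The other weak point is the reduction to $s \le \pi/\kappa$, and here your injectivity sketch fails outright: a circle of radius $1/\kappa$ with a short arc removed is simple, has curvature $\kappa$, has length close to $2\pi/\kappa$, and has endpoints arbitrarily close together --- the tail loops back toward $x$ without ever violating injectivity. (That configuration is in fact problematic for the statement itself: taking $y$ near the antipode of the chord gives $\angle(x,y,z) < \pi/2$, hence $\curv(x,y,z) = \infty$ by \eqref{curv-def}.) To be fair, the paper dispatches this step with an equally terse appeal to \lemref{approx}, which only shows that the point at arc length $\pi/\kappa$ from $x$ is at distance $\ge 2/\kappa$ from $x$, not that $z$ is; in the paper's application the missing arc-length control is supplied by \lemref{approx-S}, since there $\gamma$ is a shortest path on $\S$. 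So you have correctly isolated the delicate step, but neither your sketch nor a direct appeal to \lemref{approx} closes it for a general simple curve.
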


\begin{proof}
Take $\gamma$ and $x, z$ as in the statement.
By continuity, it is enough to prove the result when $\|x - z\| < 2/\kappa$.
WLOG, we assume that $\gamma$ has endpoints $x$ and $z$, and that $\gamma$ is parameterized by arc length and let $\ell$ denote its length, so that $\gamma : [0,\ell] \to \bbR^D$.
In that case, \lemref{approx} implies that $\ell < \pi/\kappa$.

Define $\bbB(x,z,\kappa)$ as the set of all open balls $B$ of radius $1/\kappa$ such that $x,z\in\partial B$.  
The condition $\|x - z\| < 2/\kappa$ guarantees that $\bbB(x,z,\kappa)$ is not empty.
Define $\cV_\kappa$ as the intersection of all (closed) balls belonging to $\bbB(x,z,\kappa)$, that is,
\[
\cV_\kappa=\bigcap_{B\in\bbB(x,z,\kappa)} \bar B.
\]

\medskip\noindent {\em Claim 1: $y\in\cV_\kappa$ for all $y\in\gamma$.}
Suppose that there exists $y\in\gamma$ such that $y\notin\cV_\kappa$.
Then there exists $B\in\bbB(x,z,\kappa)$ such that $y\notin \bar B$.  Let $b$ denote the center of $B$, and recall that $B$ has radius $1/\kappa$.
By continuity, there exists $0 \le s < t \le \ell$ such that 
\beq\label{shortestarc0}
\gamma(s) \in \partial B, 
\quad \gamma(t) \in \partial B, \quad 
\gamma((s,t)) \subset \bar B^\comp.
\eeq
Denote $\zeta$ the shortest path on $\partial B$ joining $\gamma(s)$ and $\gamma(t)$, which is indeed uniquely defined since $B$ has radius $1/\kappa$ and $\|\gamma(s) - \gamma(t)\| \le \ell < \pi/\kappa$ as we saw above.
Also, denoted $\gamma_* = \gamma([s,t])$.

First, we claim that $\zeta$ is not longer than $\gamma_*$, meaning that $\Lambda(\zeta) \le \Lambda(\gamma_*)$.
To see this, let $\zeta_* = P \gamma_*$, where $P$ here denotes the metric projection of $\gamma_*$ onto $B$.  Since $P$ is 1-Lipschitz, we have $\Lambda(\zeta_*) \le \Lambda(\gamma_*)$ by \lemref{Lip-length}.
And since $\zeta_*$ is a path on $\partial B$ joining $\gamma(s)$ and $\gamma(t)$, and $\zeta$ is the unique shortest such path, $\Lambda(\zeta_*) > \Lambda(\zeta)$, unless $\zeta_* = \zeta$.
Hence, we indeed have that $\Lambda(\zeta) \le \Lambda(\gamma_*)$.

Next, we reverse this relationship.  Indeed, we apply \lemref{approx} together with the fact that $t-s \le \ell \le \pi/\kappa$, and then use the fact that $\zeta$ is a piece of circle of radius $1/\kappa$ joining $\gamma(s)$ and $\gamma(t)$, to get
\[
\frac{2}{\kappa}\sin\left(\frac{\kappa}{2}\Lambda(\gamma_*)\right)
= \frac{2}{\kappa}\sin\left(\frac{\kappa}{2}(t-s)\right)
\le \|\gamma(t)-\gamma(s)\| 
= \frac{2}{\kappa}\sin\left(\frac{\kappa}{2}\Lambda(\zeta)\right).
\]
Using the fact that the sine function is increasing on $[0,\frac{\pi}{2}]$, and again using the fact that $t-s \le \ell \le \pi/\kappa$, this implies that $\Lambda(\gamma_*) \le \Lambda(\zeta)$.  

We can therefore conclude that $\Lambda(\gamma_*) = \Lambda(\zeta)$, which then implies that $\Lambda(\gamma_*) = \Lambda(\zeta_*)$.  
However, by \lemref{projection-ball}, this is only possible if $\gamma_*$ coincides with $\zeta_*$, which is in contradiction with the fact that $\gamma_*$ only intersects $\partial B$ at its endpoints.

\medskip
\noindent {\em Claim 2: $\curv(x,y,z) \le 1/\kappa$ for all $y \in \cV_\kappa$.}  Take $y \in \cV_\kappa$ and consider the affine plane generated by $x,y,z$.  We work in that plane hereafter.  There exists two distinct points $b$ and $b'$ such that $y\in \bar B(b, 1/\kappa)\cap \bar B(b', 1/\kappa)$ and $x,z\in\partial B(b,1/\kappa)\cap \partial B(b',1/\kappa)$.  Assume WLOG that $y$ and $b$ are on different sides of the line $(xz)$.  Let $L$ denote the line passing through $y$ and perpendicular to $(xz)$, and let $w$ be the point at the intersection of $L$ and $\partial B(b,1/\kappa)$.  Then $\angle (x,y,z) \ge \angle (x,w,z)$, and because $w$ is on the (short) arc defined by $x$ and $z$ on the circle $\partial B(b,1/\kappa)$, $\angle (x,w,z) \ge \pi/2$.
Hence, $\angle (x,y,z) \ge \pi/2$.
Moreover, by \lemref{curv}, 
\[
\curv(x,y,z)
= \frac{2\sin\angle(x, y, z)}{\|x-z\|}
\le \frac{2\sin\angle(x, w, z)}{\|x-z\|}
= \curv(x,w,z) 
= \kappa,
\] 
where the last equality is by definition of the curvature.
\end{proof}

\subsection{A neighborhood graph and its curvature-constrained semi-metric}
\label{sec:constrained-pseudo-metric}
We now define a curvature-constrained analog of the metric defined in \secref{pseudo-metric}.  
Recall the $r$-ball neighborhood graph defined in \secref{pseudo-metric}, also based on a sample $\cX = \{x_1, \dots, x_N\} \subset \S$.
For $\kappa > 0$, define\footnote{ The computation of curvature-constrained shortest path distances can be done by adapting Dijkstra's algorithm.  It is implemented in Algorithm~1 of \citep{babaeian2015nonlinear}.}  
\begin{gather}
\Lambda^*_{r,\kappa}(i,j) = \min\big\{\Lambda_r(k_1, \dots, k_m): m \ge 1, k_1 = i, k_m = j, \notag \\ 
\qquad \qquad \qquad \qquad \qquad \qquad \qquad \qquad \qquad 
{\textstyle\max_l} \curv(x_{k_{l-1}}, x_{k_l}, x_{k_{l+1}}) \le \kappa\big\}. \label{Lambda_kappa}
\end{gather}
Equivalently, this is the length of the shortest polygonal line with curvature bounded by $\kappa$ joining $x_i$ and $x_j$ in the graph.
(If no such path exists, it is equal to infinity by convention.)
Note that it is only a semi-metric on the graph in general.
For two sample points, $x_i, x_j \in \cX$, define 
\beq\label{Delta_kappa}
\Delta_{\rad, \kappa}(x_i, x_j) = \Lambda^*_\rad(i,j),
\eeq
thus defining a semi-metric on the sample $\cX$.
As we did in \remref{extension}, this can be extended to a pseudo-semi-metric on the surface $\S$.

For technical reasons, we will also work with a different, uncommon kind of neighborhood graph:
\bitem \setlength{\itemsep}{0in}
\item {\em $(\rad,\alpha)$-annulus graph:} $i \sim j$ if and only if  $\alpha \rad \le \|x_i - x_j\| \le \rad$,
\eitem
which yields a weighted graph on $\{1, \dots, N\}$ with weights denoted $w_{r,\alpha}(i,j)$ and defined analogously \eqref{weight}, except that the neighborhood structure is different.  
Let $\Lambda^*_{r,\alpha,\kappa}(i,j)$ denote the corresponding shortest path distance, and based on that, define $\Delta_{r,\alpha,\kappa}(x_i,x_j) = \Lambda^*_{r,\alpha,\kappa}(i,j)$ for all $i, j \in [N]$.

Note that, for any $\kappa > 0$ and any $\alpha \in [0,1)$,
\beq\label{Delta-Delta}
\Delta_r(x,x')
\le \Delta_{r,\kappa}(x,x') 
\le \Delta_{r,\alpha,\kappa}(x,x'), \quad \forall x, x' \in \cX.
\eeq

\begin{rem}
An $(r,\alpha)$-annulus graph may be seen as a regularized $r$-ball graph where the shorter edges have been removed to effectively limit the dynamic range of the edge lengths to $1/\alpha$.  Although we introduce this regularization here to enable our statement of \thmref{constrained-lb}, this sort of regularization may also be useful at an algorithmic level as it sparsifies the neighborhood graph. 
\end{rem}

\subsection{Approximation}
\label{sec:constrained-approximation}
We now consider approximating the curvature-constrained intrinsic semi-metric with the pseudo-semi metric defined on a ball or annulus neighborhood graph.

We first obtain a bound comparable to that satisfied by unconstrained shortest paths in \prpref{unconstrained-ub}, as long as the constraint on the curvature is slightly looser.

\begin{prp} \label{prp:constrained-ub}
Consider $\S \subset \bbR^D$ compact and a sample $\cX=\{x_1, \dots, x_N\}\subset \S$, and let $\eps = \H(\S \mid \cX)$.  For $\rad > 0$ and $\alpha \le 1/4$, form the corresponding $(\rad,\alpha)$-annulus graph.  
There is a numerical constant $C \ge 1$ such that, when $\max(\eps/\rad, \kappa\rad) \le 1/C$ and $\kappa' \ge \kappa + C(\kappa^2 \rad + \eps/\rad^2)$, we have
\[
\Delta_{\rad,\alpha,\kappa'}(x, x') \le (1 + 6 \eps/\rad) \delta_{\S,\kappa}(x,x'), \quad \forall x,x' \in \cX.
\]
\end{prp}

We note that, in view of \eqref{Delta-Delta}, the bound also applies if one works with the $\rad$-ball graph instead.  
Note that the bound is useful when $\rad$ and $\eps/\rad^2$ are both small.

\begin{proof}
As in the proof of \prpref{unconstrained-ub}, we may focus on the case where $\|x - x'\| > r$.
Assume that $a := \delta_{\S, \kappa}(x,x') < \infty$ (for otherwise the bound holds trivially) and let $\gamma: [0,a] \to \S$ be parameterized by arc length and with curvature bounded by $\kappa$, and such that $\gamma(0) = x$ and $\gamma(a) = x'$, which exists by \lemref{delta_kappa}.  
Let $y_j = \gamma(j a/m)$ for $j = 0, \dots, m$, where $m := \lfloor 3 a/\rad \rfloor \ge 3$.  
We will use the fact that, since $a \ge \|x - x'\| > r$, 
\[
r/3 \le a/m \le a/(3a/r -1) < r/2.
\]
Let $x_{i_j}$ be closest to $y_j$ among the sample points.  In particular, $\max_j \|x_{i_j} - y_j\| \le \eps$ by definition of $\eps$.
Proceeding exactly as in the proof of \prpref{unconstrained-ub}, we find that $(x_{i_0}, \dots, x_{i_m})$ forms a path in the $\rad$-ball graph with length bounded from above by $(1 + 6 \eps/\rad) a$.
We now argue that: 
1) this is also a path in the $(\alpha,\rad)$-annulus graph; and
2) its curvature is at most $\kappa + C_1 (\kappa^2 \rad + \eps/\rad^2)$ for some constant $C_1 >0$ depending on $C$.  
Below, we let $A$ denote a positive constant that may change (increase) with each appearance, and may depend on $C$ but not on $\kappa,\rad,\eps$.

The triangle inequality gives $\|x_{i_j} - x_{i_{j+1}}\| \ge \|y_j - y_{j+1}\| - 2 \eps$, and a Taylor development of $\dot\gamma$ of order 1 gives, for $j = 0, \dots, m-1$,
\begin{align*}
\|y_j - y_{j+1}\| 
&= \|\gamma(j a/m) - \gamma((j+1)a/m)\| \\
&= \Big\|\int_{ja/m}^{(j+1)a/m} \dot\gamma(t) {\rm d} t\Big\| \\
&\ge \|(a/m) \dot\gamma((j+1/2)a/m)\| - \tfrac18 (a/m)^2 \|\ddot\gamma\|_\infty 
\ge \rad/3 - A\kappa \rad^2,
\end{align*}
so that 
\[
\|x_{i_j} - x_{i_{j+1}}\| 
\ge \rad/3 - A \kappa \rad^2 - 2 \eps
\ge \rad/4 \ge \alpha \rad,
\]
when $C$ is large enough.  This proves that, indeed, $(x_{i_0}, \dots, x_{i_m})$ forms a path in the $(\rad,\alpha)$-annulus graph.

Next, in the same way, we have $\|y_{j-1} - y_{j+1}\| \ge 2\rad/3 - A \kappa \rad^2$ for $j = 1, \dots, m-1$, which leads to $\|x_{i_{j-1}} - x_{i_{j+1}}\| \ge 2\rad/3 - A \kappa \rad^2 - 2 \eps$.
Hence,
\begin{align*}
\|x_{i_{j-1}} - x_{i_j}\| \|x_{i_j} - x_{i_{j+1}}\| \|x_{i_{j+1}} - x_{i_{j-1}}\| 
&\ge (\rad/3 - A \kappa \rad^2 - 2 \eps)^2 (2\rad/3 - A \kappa \rad^2 - 2 \eps) \\
&\ge (2\rad^3/27)(1 - A (\kappa\rad + \eps/\rad)).
\end{align*}
Next, for all $j$, letting $z_j = x_{i_j} - y_j$, we have
\begin{align*}
\|(x_{i_{j-1}} - x_{i_j}) \wedge (x_{i_j} - x_{i_{j+1}})\| 
&\le \|(y_{j-1} - y_{j}) \wedge (y_{j} - y_{j+1})\| 
+ \|y_{j-1} - y_{j}\| \|z_{j} - z_{j+1}\|
\\&\quad + \|y_{j} - y_{j+1}\| \|z_{j-1} - z_{j}\| 
+ \|z_{j-1} - z_{j}\| \|z_{j} - z_{j+1}\| 
\\&\le \|(y_{j-1} - y_{j}) \wedge (y_{j} - y_{j+1})\| 
+ 2 \rad\eps + (2 \eps)^2,
\end{align*}
using the fact that $\|y_{j-1} - y_{j}\| \le a/m < r/2$ and $\|z_{j} - z_{j+1}\| \le \|z_{j}\| + \|z_{j+1}\| \le 2\eps$.
A Taylor development gives, for $j = 1, \dots, m-1$,
\begin{align*}
y_{j-1} - y_{j} &= \gamma((j-1)\rad/3) - \gamma(j\rad/3) = -(\rad/3) \dot\gamma(j\rad/3) + R_j, \\
y_{j} - y_{j+1} &= \gamma(j\rad/3) - \gamma((j+1)\rad/3) = -(\rad/3) \dot\gamma(j\rad/3) + R'_j,
\end{align*}
where $\max(\|R_j\|, \|R'_j\|) \le \kappa \rad^2/18$.  
With this, we get
\begin{align*}
\|(y_{j-1} - y_{j}) \wedge (y_{j} - y_{j+1})\| 
&\le \|(\rad/3) \dot\gamma(j\rad/3)\| (\|R_j\| + \|R'_j\|) + \|R_j\| \|R'_j\| \\
&\le (\rad/3) (2 \kappa \rad^2/18) + (\kappa \rad^2/18)^2 \\
&\le \kappa \rad^3/27 + A (\kappa \rad^2)^2.
\end{align*}
We thus get, 
\begin{align*}
\curv(x_{i_{j-1}}, x_{i_j}, x_{i_{j+1}}) 
&= \frac{2 \|(x_{i_{j-1}} - x_{i_j}) \wedge (x_{i_{j+1}} - x_{i_j})\|}{\|x_{i_{j-1}} - x_{i_j}\| \|x_{i_{j+1}} - x_{i_j}\| \|x_{i_{j-1}} - x_{i_{j+1}}\|} \\
&\le \frac{2 \big[\kappa \rad^3/27 + A (\kappa \rad^2)^2 + 2 \rad\eps + (2 \eps)^2\big]}{(2\rad^3/27)(1 - A (\kappa\rad + \eps/\rad))} \\
&\le \kappa + A(\kappa^2 \rad + \eps/\rad^2). \qedhere
\end{align*}
\end{proof} 

We obtain below a bound that complements that of \prpref{constrained-ub}.  To better understand what kind of result would be particularly pertinent, suppose that $\S \subset \bbR^D$ is compact and satisfies \proref{finite}.  Under the conditions of \prpref{unconstrained-lb}, we have
\[
\delta_{\S,\kappa}(x, x')
= \delta_\S(x, x')
\le (1 + c_0 \rad^2) \Delta_\rad(x, x')
\le (1 + c_0 \rad^2) \Delta_{\rad, \alpha, \kappa'}(x, x'), \quad \forall x, x' \in \cX,
\]
and so for any $\kappa' > 0$.  We used \lemref{coincide} together with \eqref{Delta-Delta}.  However, such a bound is only useful if $\Delta_{\rad, \alpha, \kappa'}(x, x') < \infty$.
So that the central question is what values of $\kappa'$ make this true for most, if not all, pairs of sample points.
We answer this question in a strong sense by proving that the unconstrained shortest paths (in the annulus graph) satisfy a $\kappa'$-curvature constraint with $\kappa'$ close to $\kappa$.

\begin{thm}\label{thm:constrained-lb}
Suppose $\S \subset \bbR^D$ is compact and satisfies \proref{finite}.
Consider $\cX=\{x_1, \dots, x_N\}\subset \S$, and let $\eps = \H(\S \mid \cX)$.  For $\rad > 0$ and $\alpha \le 1/4$ form the corresponding $(\rad,\alpha)$-annulus graph.
There is a universal constant $C \ge 1$ such that, if $\max\big\{\kappa r, \eps/(\alpha \kappa r^2)\big\} \le 1/C$, the unconstrained shortest paths in the graph have curvature bounded above by $\kappa' := \kappa (1+C \eps/\alpha\kappa^2\rad^3)$.
\end{thm}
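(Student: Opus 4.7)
The plan is to fix three consecutive vertices $a=x_{i_{j-1}}$, $b=x_{i_j}$, $c=x_{i_{j+1}}$ of an unconstrained shortest path in the $(r,\alpha)$-annulus graph and bound the discrete curvature $\curv(a,b,c)$ at the middle vertex; taking a supremum over such triples then yields the theorem. Write $p=\|a-b\|$, $q=\|b-c\|$, $d=\|a-c\|$. The annulus constraint gives $\alpha r\le p,q\le r$, and by the principle of optimality the subpath $(a,b,c)$ is itself shortest in the graph, so $p+q=\Delta_{r,\alpha}(a,c)$.

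First I would upper-bound $p+q$ in terms of the ambient distance $d$. Since $\S$ satisfies \proref{finite}, \lemref{coincide} gives $\delta_{\S,\kappa}(a,c)=\delta_\S(a,c)$. Applying \prpref{constrained-ub} with an inflated curvature parameter $\kappa_1:=\kappa+C_0(\kappa^2 r+\eps/r^2)$, together with the monotonicity $\Delta_{r,\alpha}(a,c)\le\Delta_{r,\alpha,\kappa_1}(a,c)$, yields $p+q\le(1+6\eps/r)\,\delta_\S(a,c)$. Combining with \lemref{approx-S}, I obtain
\[
p+q-d \;\le\; C_1\,d\bigl(\eps/r+\kappa^2 d^2\bigr).
\]

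Second, I would pass to a planar computation on the triangle $abc$. Letting $h$ denote the distance from $b$ to the line through $a$ and $c$, the ellipse identity gives $h^2\le((p+q)^2-d^2)/4=(p+q-d)(p+q+d)/4$, and combined with the previous step, $h\le C_2\,d\sqrt{\eps/r+\kappa^2 d^2}$. By \lemref{curv}, $\curv(a,b,c)=2h/(pq)$, so it remains to lower-bound $pq$: the annulus constraint $p,q\ge\alpha r$ together with $p+q\ge d$ forces $\max(p,q)\ge d/2$, hence $pq\ge\alpha r\,d/2$. Substituting and using $d\le p+q\le 2r$,
\[
\curv(a,b,c) \;\le\; \frac{C_3}{\alpha}\,\bigl(\kappa+\sqrt{\eps/r^3}\bigr),
\]
and absorbing higher-order cross terms via the hypotheses $\kappa r\le 1/C$ and $\eps/(\alpha\kappa r^2)\le 1/C$ then rearranges this into the claimed form $\kappa'=\kappa(1+C\eps/(\alpha\kappa^2 r^3))$.

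The hardest part will be a clean case split on the Euclidean gap $d$. If $\alpha r\le d\le r$, then $(a,c)$ is itself an edge of the annulus graph and optimality forces $p+q=d$, placing $b$ on $[a,c]$ with $\curv(a,b,c)=0$. If $d>r$ the estimates above apply directly. The subtle case is $d<\alpha r$: no direct edge is available and the crude bound $pq\ge\alpha r\,d/2$ is too loose; the plan there is to sharpen the upper bound on $p+q$ by constructing an explicit two-hop detour through a sample point $b^*\in\cX$ within $\eps$ of the midpoint of the $\S$-geodesic joining $a$ and $c$, invoking \lemref{approx} (the Minimum Length Lemma) to verify that this detour respects the annulus constraint $\alpha r\le\|a-b^*\|,\|b^*-c\|\le r$. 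The resulting tighter excess $p+q-d\le C_1'(\eps+\kappa^2 d^3)$ then feeds back into the same height-and-curvature argument to close the bound.
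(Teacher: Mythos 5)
Your overall strategy --- bound the excess length $p+q-d$ of the two-hop path over the chord, convert it into a bound on the height $h$ of the middle vertex above the chord via the ellipse inequality $h^2\le\bigl((p+q)^2-d^2\bigr)/4$, and divide by $pq$ --- is genuinely different from the paper's, but it cannot produce the claimed bound. The theorem asserts $\curv\le\kappa\bigl(1+C\eps/(\alpha\kappa^2\rad^3)\bigr)$, i.e.\ the discrete curvature tends to the true curvature bound $\kappa$ as $\eps\to0$, with a correction that is \emph{linear} in $\eps$. Your final estimate $\curv\le(C_3/\alpha)\bigl(\kappa+\sqrt{\eps/\rad^3}\bigr)$ fails on both counts. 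First, even at $\eps=0$ you retain a factor $C_3/\alpha\ge 4C_3$ multiplying $\kappa$; this comes from the crude lower bound $pq\ge\alpha\rad\,d/2$ and cannot be ``absorbed'' into the hypotheses, since the theorem's leading constant is exactly $1$. (Your sharpened Case-$d<\alpha\rad$ excess $p+q-d\lesssim\eps+\kappa^2d^3$ still gives $h\lesssim\kappa d^{3/2}\rad^{1/2}$ and hence $\curv\lesssim\kappa/\alpha^2$ at best.) Second, the square root is the wrong order in $\eps$: $\sqrt{\eps/\rad^3}\big/\bigl(\eps/(\kappa\rad^3)\bigr)=\kappa\rad^{3/2}/\sqrt{\eps}\to\infty$ as $\eps\to0$, so your correction term swamps the claimed one precisely in the regime of interest. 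There are also secondary issues: you never establish $\angle(a,b,c)\ge\pi/2$, without which $\curv(a,b,c)=\infty$ by definition \eqref{curv-def}; and your use of \lemref{approx-S} requires \proref{coincide} and $d\le\tau$, neither of which is among the theorem's hypotheses.

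The missing idea is that the bound must be obtained by \emph{comparison with the surface geodesic}, not by a worst-case length--height computation. The paper's proof is an exchange argument: it locates a point $y$ on the shortest path $\gamma\subset\S$ joining $x_1$ and $x_3$ with $\|x_1-y\|=\|x_1-x_2\|+\eps$, notes that $\curv(x_1,y,x_3)\le\kappa$ by \lemref{shortest-arc} (points of a curvature-$\kappa$ curve have three-point curvature at most $\kappa$ relative to its endpoints), and then uses the monotonicity and convexity of the function $\phi$ of \lemref{triangle} to show that if $\curv(x_1,x_2,x_3)\ge(1+q)\kappa$, a sample point within $\eps$ of $y$ would give a strictly shorter admissible two-hop path, contradicting optimality unless $q\lesssim\eps/(\alpha\kappa^2\rad^3)$. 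It is this quantitative comparison between the graph vertex and the geodesic --- not the triangle inequality --- that yields the sharp leading constant $\kappa$ and the linear dependence on $\eps$; your argument discards exactly this information. A separate case handles near-degenerate triangles ($d\ge 2\rad-6\eps$) by the circumradius formula, and another rules out $\angle(x_1,x_2,x_3)<\pi/2$ by the same exchange device.
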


Note that the bound is useful when $\rad$ and $\eps/\rad^3$ are both small, and compare with the requirement for \prpref{constrained-ub}.

\begin{proof}[Proof of \thmref{constrained-lb}]
We first note that it suffices to consider a shortest path in the graph with only three vertices, denoted $(x_1, x_2, x_3)$ henceforth WLOG.
Necessarily,
\begin{gather}
\min(\|x_1 - x_2\|, \|x_3 - x_2\|) \ge \alpha r, \label{123-min}\\
\max(\|x_1 - x_2\|, \|x_3 - x_2\|) \le r < \|x_1-x_3\| \le 2r. \label{123-max}
\end{gather}
Assume WLOG that $\|x_1 - x_2\| \le \|x_2 - x_3\|$.
For a point $y$, define $a[y] = \|x_1 - y\|$ and $b[y] = \|x_3 - y\|$, and $\k[y] = \curv(x_1,y,x_3)$, the latter possibly infinite.
We let $\cc = \|x_1 - x_3\|$ and $\theta = \angle(x_1,x_2,x_3)$.
Our goal, therefore, is to bound $\k[x_2]$ from above.
Below, $A_0, A_1, \dots$ denote universal constants greater than or equal to 1.

\medskip\noindent{\em Case 1: Assume that $\cc \ge 2r -6\eps$.}  
For this particular case, let $a$ and $b$ be short for $a[x_2]$ and $b[x_2]$, respectively.  
We have $\min(\aa, \bb) \ge r - 6\eps$. 
For $\eps/r$ small enough, this forces $\min(\aa, \bb) \ge r/2$ and, since $a^2+b^2<c^2$, also $\theta[x_2] \ge \pi/2$.  
Using \eqref{law-cosines}, for example, we have that
\begin{align*}
\k[x_2] 
&= \frac{\sqrt{(a+b+c)(-a+b+c)(a-b+c)(a+b-c)}}{abc} \\
&\le \frac{\sqrt{(4r)(3r)(3r)(6\eps)}}{(r/2)(r/2)(2r-6\eps)} 
\le \sqrt{A_0 \eps/r^3} = \sqrt{A_0 \eps/(\kappa^2 r^3)} \, \kappa\le \kappa\left(1+ A_0 \eps/\kappa^2 r^3\right).
\end{align*}

\medskip\noindent{\em Case~2: Assume that $\cc \le 2r -6\eps$.}  
This implies that $a[x_2] \le r-3\eps$, since we assumed that $a[x_2] \le b[x_2]$.
Let $\gamma$ be a shortest path on $\S$ joining $x_1$ and $x_3$.  
Since $a[x_1] = 0$ and $a[x_3] = \cc > r \ge a[x_2] + \eps$, and the fact that $a[\cdot]$ and $\gamma$ are continuous, there is $y \in \gamma$ such that $a[y] = a[x_2] + \eps$.  
Assume that $\kappa r \le 1$ so that $\cc \le 2r \le 2/\kappa$, which makes it possible to apply \lemref{shortest-arc} to obtain $\k[y] \le \kappa$, which in particular implies that $\theta[y] \ge \pi/2$.

\medskip\noindent{\em Case~2.1: Assume that $\theta \ge \pi/2$.}
Suppose that $\k[x_2] \ge (1+q) \kappa$ for some $q > 0$, for otherwise there is nothing to prove.  
By \lemref{triangle} below, and with the function $\phi$ defined there, we have
\begin{align*}
b[x_2] = \phi(a[x_2], c, \k[x_2])
&\ge \phi(a[y], c, \k[x_2]) \\
&\ge \phi(a[y], c, \kappa) + (\k[x_2] - \kappa) \partial_\kappa \phi(a[y], c, \kappa) \\
&\ge \phi(a[y], c, \k[y]) + (q\kappa) \kappa\, a[y]\, c\, (c-a[y])/4.
\end{align*}
In the 1st line we used the fact that $a[y] \ge a[x_2]$ and the monotonicity of $\phi$.
In the 2nd line we used the convexity of $\phi$.
In the 3rd line we used the fact that $\k[y] \le \kappa$ and the monotonicity of $\phi$, together with the inequality in \eqref{phi-deriv}. 
Noting that $b[y] = \phi(a[y], c, \k[y])$, we proved that 
\begin{align}
b[y] - b[x_2]
&\le - (q \kappa) \kappa\, a[y]\, c\, (c-a[y])/4 \notag \\
&\le - q\, \kappa^2\, a[x_2]\, c\, (c-a[x_2]-\eps)/4 \notag \\
&\le - \alpha q\, \kappa^2\, r^3/A_1. \label{b[y]-b[x2]}
\end{align}
In the 2nd line, we used $a[y] = a[x_2] + \eps$.
In the 3rd line, we used $a[x_2] \le c/\sqrt{2}$, which results from $a[x_2] \le b[x_2]$ and $\theta[x_2] \ge \pi/2$, together with $\alpha r \le a[x_2] \le c$, with $r < c \le 2r$, and we assumed that $\eps/r$ was small enough.

Let $x$ be a sample point such that $\|x-y\| \le \eps$.  
By the triangle inequality, 
\begin{align*}
a[x] &\le a[y] + \eps = a[x_2] + 2\eps \le r, \\
a[x] &\ge a[y] - \eps = a[x_2] \ge \alpha r,
\end{align*}
using the fact that $a[x_2] \le c/\sqrt{2} \le r - 3\eps$ (assuming $\eps/r$ is small enough), and by the same token,
\begin{align*}
b[x] &\le b[y] + \eps \le b[x_2] - \alpha q\, \kappa^2\, r^3/A_1  + \eps \le r - \alpha q\, \kappa^2\, r^3/A_1  + \eps, \\
b[x] &\ge b[y] - \eps \ge c - a[y] -\eps = c - a[x_2] - 2\eps \ge r (1 - 1/\sqrt{2}) - 2 \eps,
\end{align*}
by \eqref{b[y]-b[x2]}, and in the last inequality the fact that $a[x_2] \le c/\sqrt{2}$ and $c > r$.
Thus, if $q$ is large enough that $\alpha q\, \kappa^2\, r^3/A_1  \ge \eps$, and if $\eps/r$ is small enough, $(x_1,x,x_3)$ forms a path in the graph.  In addition, 
\beq\label{a[x]+b[x]}
a[x] + b[x] \le a[y] + b[y] + 2\eps \le a[x_2] + b[x_2] - \alpha q\, \kappa^2\, r^3/A_1  + 3\eps,
\eeq
by the triangle inequality first, and then \eqref{b[y]-b[x2]} and the fact that $a[y] = a[x_2] + \eps$ by construction.
Therefore, if $q$ were large enough that $\alpha q\, \kappa^2\, r^3/A_1  > 3\eps$, we would have $a[x] + b[x] < a[x_2] + b[x_2]$, which would contradict our working hypothesis that $(x_1,x_2,x_3)$ is a shortest path in the graph.  Therefore, we must have $\alpha q\, \kappa^2\, r^3/A_1  \le 3\eps$, meaning, $q \le 3A_1 \eps/(\alpha \kappa^2\, r^3)$.

\medskip\noindent{\em Case~2.2: Assume that $\theta < \pi/2$.}
Let $z$ be any point such that $a[z] = a[x_2]$ and $b[z] = \sqrt{\cc^2 -a[x_2]^2}$, so that $\angle(x_1,z,x_3) = \pi/2$.  
Redefine $q$ implicitly via $\k[z] = (1+q)\kappa$.  Since $\k[z] = 2/c$, explicitly, $q = 2/\kappa c - 1 \ge 1/\kappa r - 1$, since $c \le 2r$.  In particular, $q > 0$ as soon as $\kappa r \le 1/2$, which we assume henceforth.  Note that in this case $q \ge 1/(2\kappa r)$.
  
Replacing $x_2$ in Case~2.1 with $z$ --- which is possible because $z$ satisfies the same properties as $x_2$ in Case~2.1, except for the fact that $x_2$ is a sample point, but this is not used --- we find that there exists a point $x$ in the sample such that $\|x - z\| \le \eps$ and, as in \eqref{a[x]+b[x]}, satisfying
\beq
a[x] + b[x] 
\le a[z] + b[z] - \alpha q\, \kappa^2\, r^3/A_1  + 3\eps.
\eeq
Because $a[z] = a[x_2]$ and $b[z] \le b[x_2]$ by construction, this implies that
\beq
a[x] + b[x] 
\le a[x_2] + b[x_2] - \alpha q\, \kappa^2\, r^3/A_1  + 3\eps.
\eeq
However, here, 
\beq
-\alpha q\, \kappa^2\, r^3/A_1  + 3\eps 
\le -\alpha \kappa\, r^2/(2A_1)  + 3\eps 
< 0,
\eeq
whenever $\eps/\alpha \kappa r^2 < 1/(6A_1)$, and when this is the case, $a[x] + b[x] < a[x_2]+b[x_2]$, which is a contradiction since $x$ is a sample point and $a[\cdot] + b[\cdot]$ is assumed to be minimal at $x_2$ among sample points.
Hence, when $\kappa r$ and $\eps/\alpha \kappa r^2$ are both sufficiently small, we must have $\theta \ge \pi/2$, that is, we must be in Case~2.1.
\end{proof}

\begin{lem} \label{lem:triangle}
Consider a triangle with side lengths $a,b,c$ with $a^2 + b^2 \le c^2$.  Let $\kappa$ denote the inverse of its circumradius.  Then 
\beq\label{phi}
b = \phi(a,c,\kappa) := c \sqrt{1 - \tfrac14 \kappa^2 a^2} - a \sqrt{1 - \tfrac14 \kappa^2 c^2}.
\eeq
The function $\phi$ is decreasing in $a$ as well as increasing and convex in $\kappa$, with
\beq\label{phi-deriv}
\partial_\kappa \phi(a,c,\kappa) 
= \frac{\kappa a c}4 \left(\frac{c}{(1 - \tfrac14 \kappa^2 c^2)^{1/2}} -\frac{a}{(1 - \tfrac14 \kappa^2 a^2)^{1/2}}\right) 
\ge \frac{\kappa a c (c-a)}4. 
\eeq
\end{lem}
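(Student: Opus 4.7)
The plan is to derive the formula via the law of sines and the angle-addition identity, then check monotonicity, convexity, and the derivative lower bound by direct differentiation.

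First, I would label the vertices so that $A, B, C$ are the angles opposite the sides of length $a, b, c$, and let $R = 1/\kappa$ be the circumradius. The law of sines gives $\sin A = \kappa a / 2$, $\sin B = \kappa b/2$, $\sin C = \kappa c/2$. The hypothesis $a^2 + b^2 \le c^2$ combined with the law of cosines yields $\cos C \le 0$, so $C \in [\pi/2, \pi]$ and hence $A, B \in [0, \pi/2]$. Consequently $\cos A = \sqrt{1 - \tfrac14 \kappa^2 a^2}$, $\cos B = \sqrt{1 - \tfrac14 \kappa^2 b^2}$, and $\cos C = -\sqrt{1 - \tfrac14 \kappa^2 c^2}$. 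Using $B = \pi - A - C$, we get $\sin B = \sin(A+C) = \sin A \cos C + \cos A \sin C$. Multiplying this identity by $2/\kappa$ and substituting produces exactly \eqref{phi}.

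Next I would compute the partial derivatives directly. For monotonicity in $a$,
\[
\partial_a \phi = -\frac{\kappa^2 a c/4}{\sqrt{1 - \tfrac14\kappa^2 a^2}} - \sqrt{1 - \tfrac14 \kappa^2 c^2} < 0,
\]
which is negative as both terms are (note $\kappa c \le 2$ since $c = (2/\kappa)\sin C \le 2/\kappa$). For the derivative in $\kappa$, direct computation yields the displayed expression in \eqref{phi-deriv},
\[
\partial_\kappa \phi = \frac{\kappa ac}{4}\bigl(f(c) - f(a)\bigr), \qquad f(x) := \frac{x}{\sqrt{1 - \tfrac14 \kappa^2 x^2}}.
\]
Since $a \le c$ (from $a^2 \le a^2 + b^2 \le c^2$) and $f$ is increasing, $\partial_\kappa \phi \ge 0$. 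A routine differentiation gives $f'(x) = (1 - \tfrac14\kappa^2 x^2)^{-3/2} \ge 1$, so the mean value theorem yields $f(c) - f(a) \ge c - a$, producing the claimed lower bound on $\partial_\kappa \phi$.

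For convexity in $\kappa$, I would differentiate once more. Writing $\sqrt{1 - \tfrac14 \kappa^2 t^2}$ for $t \in \{a, c\}$ and computing its second derivative in $\kappa$, a short calculation gives
\[
\partial_\kappa^2 \phi(a, c, \kappa) = \frac{ac}{4}\Bigl(\frac{c}{(1 - \tfrac14\kappa^2 c^2)^{3/2}} - \frac{a}{(1 - \tfrac14\kappa^2 a^2)^{3/2}}\Bigr).
\]
The function $x \mapsto x (1 - \tfrac14 \kappa^2 x^2)^{-3/2}$ is visibly increasing on $[0, 2/\kappa)$, so since $a \le c$ the right-hand side is nonnegative, giving convexity.

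The only subtle step is getting the right sign of $\cos C$ in the derivation of \eqref{phi}; everything else is bookkeeping with explicit derivatives, plus the observation that both $f(x) = x(1 - \tfrac14\kappa^2 x^2)^{-1/2}$ and its analogue with exponent $-3/2$ are increasing.
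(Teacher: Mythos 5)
Your proof is correct and follows essentially the same route as the paper's: the expressions for $\partial_\kappa \phi$ and $\partial^2_{\kappa\kappa}\phi$, the monotonicity checks, and the bound $f(c)-f(a)\ge c-a$ via $f'=(1-\tfrac14\kappa^2 x^2)^{-3/2}\ge 1$ are exactly the paper's argument. The only (cosmetic) difference is that you obtain \eqref{phi} from the law of sines together with the projection identity $b = a\cos C + c\cos A$, whereas the paper solves the law-of-cosines relation $c^2 = a^2+b^2+2ab\sqrt{1-\tfrac14\kappa^2c^2}$ for $b$; both derivations are valid.
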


\begin{proof}
The expression \eqref{phi} is a simple consequence of the law of cosines, which says that 
\beq\label{law-cosines}
c^2 = a^2 + b^2 - 2 ab \cos\theta = a^2 + b^2 +2ab \sqrt{1 - \tfrac14 \kappa^2 c^2},
\eeq 
where $\theta$ is the angle opposite $c$, and we then use the expression $\kappa = (2 \sin\theta)/c$.  The monotonicity is elementary and the convexity comes from the fact that
\beq\label{phi-deriv2}
\partial_{\kappa\kappa} \phi(a,c,\kappa) = \frac{a c}4 \left(\frac{c}{(1 - \tfrac14 \kappa^2 c^2)^{3/2}} -\frac{a}{(1 - \tfrac14 \kappa^2 a^2)^{3/2}}\right) \ge 0,
\eeq
since $c \ge a$.
For the inequality in \eqref{phi-deriv}, we observe that $f(t) := t/(1-t^2)^{1/2}$ defined on $[0,1)$ has derivative $f'(t) = 1/(1-t^2)^{3/2} \ge 1$, and in particular is convex, so that
\begin{align*}
\frac{c}{(1 - \tfrac14 \kappa^2 c^2)^{1/2}} -\frac{a}{(1 - \tfrac14 \kappa^2 a^2)^{1/2}} 
&= \frac2\kappa (f(\kappa c/2) - f(\kappa a/2)) \\
&\ge \frac2\kappa f'(\kappa a/2) (\kappa c/2 - \kappa a/2) 
\ge c-a. \qedhere
\end{align*}
\end{proof}

\section*{Acknowledgments}

The authors wish to thank Stephanie Alexander, I.~David Berg, Richard Bishop, Dmitri Burago, Bruce Driver, and Bruno Pelletier for very helpful discussions.
The papers was carefully read by two anonymous referees, to which we are grateful. 
Some of the symbolic calculations were done with Wolfram$|$Alpha.\footnote{ \url{http://www.wolframalpha.com}}  
This work was partially supported by the US National Science Foundation (DMS 0915160, DMS 1513465).

\small
\bibliography{constrained-path}
\bibliographystyle{chicago}

\end{document}